   \newtheorem{thm}{Theorem}[section]
          \newtheorem{prop}[thm]{Proposition}
          \newtheorem{lem}[thm]{Lemma}
          \newtheorem{cor}[thm]{Corollary}
          \newtheorem{defn}[thm]{Definition}
          \newtheorem{rem}[thm]{Remark}
\renewcommand{\section}
{\@startsection{section}{1}{0pt}{\medskipamount}{\medskipamount}{\large\bf}}
\makeatletter\renewcommand{\subsection}
{\@startsection{subsection}{2}{\z@}{-3.25ex plus -1ex minus -.2ex}
{1.5ex plus .2ex}{\bf }}
\numberwithin{equation}{section}
\def\cD{\mathcal{D}}
\def\cH{\mathcal{H}}
\def\cN{\mathcal{N}}
\def\cM{\mathcal{M}}
\def\cS{\mathcal{S}}
\def\cT{\mathcal{T}}
\def\mM{\mathscr{M}}
\def\mR{\mathscr{R}}
\def\fA{\mathfrak{A}}
\def\fI{\mathfrak{I}}
\def\fF{\mathfrak{F}}
\newcommand{\MassMultiplicativeOperator}{\cT}
\newcommand{\Dirac}[1]{\cD_{#1}}
\newcommand{\propagator}[2]{E^{#1}_{#2}}
\newcommand{\spinorpropagator}[2]{E^{#1}_{#2}}
\newcommand{\cospinorpropagator}[2]{E^{*\,#1}_{#2}}
\newcommand{\FermionicOperator}{\cS}
\newcommand{\FermionicProjectionOperator}[1]{\cS_{#1}}
\newcommand{\ModifiedFermionicProjectionOperator}[1]{{\mathlarger{\mathlarger{\mathlarger{ \mathfrak{s}}}}}_{#1}}
\newcommand{\SmearingMassOperator}{\mathfrak{p}}
\newcommand{\Moller}[1]{R_{#1}}
\newcommand{\SequenceMoller}{\mathfrak{R}_{\massofinterest}}
\newcommand{\projector}{\Pi}
\newcommand{\FermionicProjectorState}{\omega_{\textrm{FP}}}
\newcommand{\fiberpairing}[2]{\prec#1\ |\ #2\succ}
\newcommand{\scalarproduct}[2]{\left(#1\ |\ #2\right)}
\newcommand{\spinorscalarproduct}[2]{\left(#1\ |\ #2\right)_{(s)}}
\newcommand{\cospinorscalarproduct}[2]{\left(#1\ |\ #2\right)_{(c)}}
\newcommand{\directsumscalarproduct}[2]{\left(#1\ |\ #2\right)_{\massinterval}}
\newcommand{\bracket}[2]{\left\langle#1\ |\ #2\right\rangle}
\newcommand{\innerproduct}[2]{\left\langle#1\ |\ #2\right\rangle_{(s)}}
\newcommand{\MassSmearedinnerproductOperatorNotation}{\cN}
\newcommand{\MassSmearedinnerproduct}[2]{\left\langle\SmearingMassOperator#1\ |\ \SmearingMassOperator#2\right\rangle_{(s)}}
\newcommand{\norm}[2]{\|#1\|_{#2}}
\newcommand{\spacetime}{\cM}
\newcommand{\Hilbert}[1]{\cH_{#1}}
\newcommand{\Hilbertspinor}[1]{\cH_{(s), #1}}
\newcommand{\Hilbertcospinor}[1]{\cH_{(c), #1}}
\newcommand{\HilbertDirectSum}{\cH_{(s),\massinterval}}
\newcommand{\sol}[1]{\textsf{Sol}(\Dirac{#1})}
\newcommand{\dualsol}[1]{\textsf{Sol}(\adjointWRTfiberpairing{\Dirac{#1}})}
\newcommand{\soldirectsum}{\textsf{Sol}_{\massinterval}}
\newcommand{\labelspace}[1]{\mathcal{S}ol_#1}
\newcommand{\sections}[2]{\Gamma_{\textrm{#1}}(#2)}
\newcommand{\spinorbundle}{S\spacetime}
\newcommand{\cospinorbundle}{S^*\spacetime}
\newcommand{\cotangentbundle}{T^*\spacetime}
\newcommand{\algebra}[1]{\fA_{#1}(\spacetime,g)}
\newcommand{\algebraF}[1]{\fF_{#1}(\spacetime,g)}
\newcommand{\ideal}{\fI}
\newcommand{\volumeform}{\textrm{d}\mu_g}
\newcommand{\Cauchysurface}{\Sigma}
\newcommand{\volumeformSigma}{\textrm{d}\mu_\Sigma}
\newcommand{\gammatrix}[1]{\gamma_{#1}}
\newcommand{\anticommutator}[2]{\lbrace #1,#2\rbrace}
\newcommand{\adjunction}[1]{A#1}
\newcommand{\spinorchargeconjugation}[1]{C_{(s)}}
\newcommand{\cospinorchargeconjugation}[1]{C_{(c)}}
\newcommand{\trace}[2]{\textrm{Tr}_{#1}({#2})}
\newcommand{\spinoroperator}{\nabla_{(s)}}
\newcommand{\cospinoroperator}{\nabla_{(c)}}
\newcommand{\genericmass}{\beta}
\newcommand{\interpolatingmass}{m}
\newcommand{\massofinterest}{\alpha}
\newcommand{\adjointWRTfiberpairing}[1]{#1^{\;\star}}
\newcommand{\Compact}{\textrm{c}}
\newcommand{\TimelikeCompact}{\textrm{tc}}
\newcommand{\PastCompact}{\textrm{pc}}
\newcommand{\FutureCompact}{\textrm{fc}}
\newcommand{\SpacelikeCompact}{\textrm{sc}}
\newcommand{\involution}{\Upsilon}
\newcommand{\unit}[1]{1_{\algebraF{\genericmass}}}
\newcommand{\state}{\omega}
\newcommand{\massinterval}{I_\massofinterest}
\newcommand{\dvol}[1]{\textrm{d}#1}
\newcommand{\domain}[1]{D(#1)}
\newcommand{\spinorm}[1]{\norm{#1}{(s)}}
\newcommand{\SequenceSpinorm}[1]{\norm{#1}{(s),\massinterval}}
\newcommand{\MassFunction}[1]{\mathfrak{m}_{#1}}
\DeclareFontFamily{OT1}{rsfso}{}
\DeclareFontShape{OT1}{rsfso}{m}{n}{ <-7> rsfso5 <7-10> rsfso7 <10-> rsfso10}{}
\DeclareMathAlphabet{\mycal}{OT1}{rsfso}{m}{n}
\newcommand{\Rindler}{\mR} 
\newcommand{\Minkowski}{\mM} 
\newcommand{\complex}{\mathbb{C}}
\def\real{\mathbb{R}}
\def\natural{\mathbb{N}}
\newcommand{\supp}[1]{\textrm{supp}(#1)}
\newcommand{\WaveFrontSet}[1]{\textrm{WF}(#1)}
\newcommand{\inverse}[1]{#1^{-1}}
\definecolor{NiColor}{RGB}{77,77,255}
\begin{document}

\begin{flushright}

\baselineskip=6pt

\end{flushright}

\begin{center}
\vspace{5mm}

{\Large\bf A new class of Fermionic Projectors:} \\ \vspace{3mm} {\Large \bf M\o ller operators and mass oscillation properties}

\vspace{8mm}

{\bf Nicol\'o Drago}${}^{\, 1,a}$ \ and \ {\bf Simone Murro}${}^{\, 2,b}$
\\[4mm]
\noindent  ${}^1$ {\it Dipartimento di Matematica, }{\it Universit\`a di Genova, }
{\it I-16146 Genova, Italy.}
\\[2mm]
\noindent  ${}^2$ {\it Fakult\"at f\"ur Mathematik, }{\it Universit\"at Regensburg, } {\it D-93040 Regensburg, Germany.}
\\[3mm]
Email: \ {\tt ${}^a\,$drago@dima.unige.it,  ${}^b\,$simone.murro@ur.de.}
\\[8mm]
October 1, 2016
\\[10mm]
\end{center}

\begin{abstract}
Recently, a new functional analytic construction of quasi-free states for a self-dual CAR algebra has been presented in \cite{Felix2}. This method relies on the so-called strong mass oscillation property. We provide an example where this requirement is not satisfied, due to the nonvanishing trace of the solutions of the Dirac equation on the horizon of Rindler space, and we propose a modification of the construction in order to weaken this condition. Finally, a connection between the two approaches is built.
\end{abstract}

\paragraph*{Keywords:} Dirac fields, Fermionic Projector, mass oscillation property, M\o ller operator, quasi-free states, self-dual CAR algebra.
\paragraph*{MSC 2010:} 81T05, 81T20, 81Q10, 46N50. 
\\[0.5mm]

\renewcommand{\thefootnote}{\arabic{footnote}}
\setcounter{footnote}{0}

\section{Introduction}\label{Introduction}

The algebraic approach is a mathematically rigorous scheme which is especially well-suited for formulating quantum theories also on globally hyperbolic spacetimes -- see \cite{Fulling, Haag, gerard, aqft2} for textbooks, \cite{aqft1, ThomasAlex, FR16} for recent reviews and \cite{DMP2, DMP3, SDH12, cospi, nico2, dual} for some applications.
The quantization of a theory in the algebraic approach is based on two steps.
The first consists of the assignment to a physical system of a $*$-algebra of observables which encodes structural properties such as causality, dynamics and the canonical commutation/anticommutation relations.
The second step calls for the identification of an algebraic state, which is a positive, linear and normalized functional on the algebra of observables.
The image of an algebra element under the action of a state is interpreted as the mean value of the associated observable on that state: In this way, it is possible to recover the usual probabilistic interpretation proper of quantum theories.
However, not every state can be regarded as physically relevant.
It is widely accepted that a criterion to single out the physical ones is to require the so-called Hadamard condition \cite{GK,wald,ChrisVerch}
The reasons for this choice are manifold: For example, it implies the finiteness of the quantum fluctuations of the expectation value of every observable and it allows to construct Wick polynomials following a covariant scheme, see \cite{HW} or \cite{IgorValter} for recent reviews.
Thanks to the seminal work of Radzikowski \cite{Radzikowski,Radzikowski2}, the Hadamard condition has been translated into the language of microlocal analysis, as a constraint on the wavefront set of the bidistribution associated to the two-point function of the state.
Nowadays, several methods to construct Hadamard states are known, see for example \cite{Fulling1, Fulling2, DMP, GW1, GW2, BDM, WZ, nico}. Yet there are several, physically interesting scenarios which have not been analyzed. In a recent paper \cite{Felix2} a new functional analytic construction of quasi-free states for a massive Dirac field in a general class of globally hyperbolic spacetimes was proposed.
This method takes advantage of the results in \cite{araki}, in particular, the one proving that the construction of a projection operator in a Hilbert space is equivalent to the assignment of a pure, quasi-free state on a CAR $*$-algebra. In particular, the procedure calls for the identification of a specific operator, dubbed Fermionic Projector. In a few words, it works as follows: Consider both the Dirac equation with the mass $m$ varying parametrically and its smooth solutions, which are chosen to be spacelike compact as well as compact in $m$. Such set can be completed to Hilbert space with respect to the scalar product induced by integration both over the manifold and over the mass.

Subsequently we assume the so-called strong mass oscillation property, that is a constraint on the decay rate at infinity of the solutions of the massive Dirac equation after integration over the mass. As a by-product this leads to the identification on the space of such solutions a continuous sesquilinear form. By applying Riesz theorem, this is tantamount to the assignment of a family of bounded symmetric operators, each acting on the subspace of solutions with a specific, fixed value of the mass. The net advantage of this construction is its independence from any structural property of the underlying background, such as the existence of specific Killing fields. 
This method has been successfully applied in \cite{simo, Felix5} to construct a distinguished
Hadamard state for a Dirac field on Minkowski spacetime, in the presence of an external time-dependent potential, subject to suitable technical constraints. Despite these successes, an undeniable limitation of this method is the intrinsic difficulty of proving that the strong mass oscillation property holds true. It has to be checked case by case, and, in general, there exist spacetimes, e.g., Rinder, where it fails to be true.

The goal of this paper is thus twofold. On the one hand we investigate an alternative procedure aimed at weakening the strong mass oscillation property, on which the construction of the Fermionic Projector lies. On the other hand, we use this novel approach to identify a class of Hadamard states  for suitable spacetimes. 
Our main idea consists of individuating unitary operators that intertwine the dynamics of two Green hyperbolic operators differing only by a mass term, extending thus the work of \cite{nico1,nico}.
Such operators in combination with an  integration over the mass of the solutions  of the Dirac equation defines a new sesquilinear form. This is continuous in either one or both entries, whenever two modified versions of the mass oscillation property are satisfied.
Once more, on account of the Riesz representation theorem, such sesquilinear form yields on the Hilbert space, built out of the smooth solutions of the Dirac equation, a symmetric, linear operator that is unbounded when only the modified weak mass oscillation property holds, while it is bounded in the other one.
In addition, we construct a pure and quasi-free state on the CAR $*$-algebra associated to a Dirac field, still using the results of Araki \cite{araki}, and we discuss whether the Hadamard condition is satisfied.
To prove the robustness of our novel method,  we investigate in detail a concrete example in which the strong mass oscillation property does not hold true, but the modified weak one does: a massive Dirac field on Rindler spacetime.

The paper is structured as follows: In Section \ref{Algebraic approach to Dirac fields} we outline both the geometric setting necessary to present the Dirac equation and its space of solutions.
Subsequently, we quantize the Dirac field following the algebraic approach.
Section \ref{Fermionic Projector state} is focused on introducing the results of \cite{Felix2} while in Section \ref{Modified Fermionic Projector state}, we construct the new class of modified Fermionic Projector states. Finally, in Section \ref{Rindler counterexample} we outline an example in which only the modified weak mass oscillation property holds.
\section{Algebraic approach to quantum Dirac fields}\label{Algebraic approach to Dirac fields}

\subsection{Dirac operator}\label{Setting}
To make the present paper sufficiently self-contained, we summarize a
few basic structural properties of spinor fields in curved spacetimes. For more details, we refer to \cite{Isham, Lawson}.

Let $\spacetime$ be a {\bf spacetime}, namely a four dimensional, Hausdorff, connected, orientable and time orientable smooth manifold endowed with a smooth Lorentzian metric $g$ of signature $(+, - , -, -)$.
In order for the Cauchy problem of the Dirac equation to be well-posed, we consider only spacetimes which are \textbf{globally hyperbolic}, \textit{i.e.} they possess a \textbf{Cauchy surface} $\Sigma$. This is a codimension 1, achronal subset whose domain of dependence coincides with the whole manifold.  In view of the analysis in \cite{Isham}, the spinor bundle can be defined as the trivial vector bundle $\spinorbundle\doteq \spacetime\times\complex^4$. Its dual, the cospinor bundle, is $\cospinorbundle \doteq\spacetime\times (\complex^4)^*$. Given any vector bundle $E\stackrel{\pi}{\to}\spacetime$ of finite rank, we denote by $\sections{\Compact}{E},\sections{\SpacelikeCompact}{E}$, $\sections{\FutureCompact}{E}$, $\sections{\PastCompact}{E}$ and $\sections{\TimelikeCompact}{E}=\sections{\FutureCompact}{E}\cap\sections{\PastCompact}{E}$ the spaces of compact, space-/future-/past-/time-like compact elements respectively \cite{KO}.
Smooth global sections of $\spinorbundle$ and of $\cospinorbundle$ are called \textbf{spinors} and \textbf{cospinors} respectively. We will indicate with $\fiberpairing{}{}$ the pointwise fiber pairing between smooth sections of $\sections{}{\spinorbundle}$ and of $\sections{}{\cospinorbundle}$. 
In addition, we introduce the $\gamma$-matrices $\{\gammatrix{a}\}_{a=0,1,2,3},$ the choice of which corresponds to fixing an irreducible representation of the Clifford algebra $C\ell_{1,3}(\mathbb{C})$.
Since each $\gamma_a$ is invertible,  we define the \textbf{adjunction map}
\begin{gather}\label{adjunction map}
\adjunction{}:\spinorbundle\to\cospinorbundle\qquad (x,v)\mapsto(x, v^\dagger\gammatrix{0}),
\end{gather}
which turns out to be a complex anti-linear vector bundle isomorphism.
Here $\dagger$ indicates the operations of transpose and of conjugation. 
The covariant derivatives both for spinors and for cospinors 
can be introduced as the first order linear partial differential operators
\begin{gather*}\label{DO per Dirac}
\spinoroperator:\sections{}{\spinorbundle}\to\sections{}{\spinorbundle}\qquad
\spinoroperator\psi\doteq\trace{g}{\gamma\nabla\psi}\\
\cospinoroperator:\sections{}{\cospinorbundle}\to\sections{}{\cospinorbundle}\qquad
\cospinoroperator\phi\doteq\trace{g}{\nabla\phi\gamma} \, .
\end{gather*}
Here  $\text{Tr}_g$ denotes the metric-contraction of the covariant two-tensor $\gamma\nabla\psi$ taking values in $\sections{}{\spinorbundle}$. A similar definition applies to $\nabla\phi\gamma$ taking values in $\sections{}{\cospinorbundle}$.
We have now all necessary tools to introduce the \textbf{Dirac operator} $\Dirac{\genericmass}:\sections{}{\spinorbundle}\to\sections{}{\spinorbundle}$ 
\begin{gather*}\label{Dirac operator}
\Dirac{\genericmass}\psi\doteq i\spinoroperator\psi-\genericmass\psi \, ,
\end{gather*}
where $\genericmass\in C^\infty(\spacetime,\real)$.
We introduce the sesquilinear form
\begin{gather}\label{bracket}
\bracket{\psi}{\phi}\doteq\int_\spacetime\fiberpairing{\psi}{\phi}\volumeform,
\end{gather}
for any $\psi\in\sections{}{\spinorbundle},\ \phi\in\sections{}{\cospinorbundle}$ such that $\fiberpairing{\psi}{\phi}\in L^1(\spacetime)$. Equivalently, we call
  $\adjointWRTfiberpairing{\Dirac{\genericmass}}:\sections{}{\cospinorbundle} \to \sections{}{\cospinorbundle}$ the formal adjoint of $\Dirac{\genericmass}$, unambiguously defined via 
\begin{gather*}\label{ajoint of Dirac operator}
\bracket{\psi}{\adjointWRTfiberpairing{\Dirac{\genericmass}}\phi}=
\bracket{\Dirac{\genericmass}\psi}{\phi}\qquad\psi\in\sections{}{\spinorbundle},\phi\in\sections{}{\cospinorbundle},
\end{gather*}
which reads explicitly  $\adjointWRTfiberpairing{\Dirac{\genericmass}}\phi=-i\cospinoroperator\phi-\genericmass\phi$. 

\subsection{The space of solutions}\label{space of solutions}

In this section, we want to characterize the space of smooth, space-like compact solutions of the Dirac equation. For a complete analysis, we refer to \cite{nicolas, Bar2, waldmann, Bar1}.

Since the Dirac operator $\Dirac{\genericmass}$ is a \textbf{prenormally hyperbolic operator}  \cite{Wr}, \textit{i.e.} $\Dirac{\genericmass}\circ\Dirac{\genericmass}$ is of hyperbolic type, there exist unique advanced/retarded propagators, namely continuous operators $\spinorpropagator{\pm}{\genericmass}:\sections{pc/fc}{\spinorbundle}\to\sections{pc/fc}{\spinorbundle}$ such that  
\begin{gather*}
\spinorpropagator{\pm}{\genericmass}\circ\Dirac{\genericmass}=
\Dirac{\genericmass}\circ\spinorpropagator{\pm}{\genericmass}=\textrm{Id}_{\spinorbundle},\qquad
\supp{\spinorpropagator{\pm}{\genericmass}\psi}\subseteq J^\pm(\supp{\psi}),
\end{gather*}
where $J^\pm(\mathcal{O})$ denotes the causal future/past of $\mathcal{O}$.
Taking the difference between the advanced and the retarded propagator, we can define the \textbf{causal propagator} as $
\spinorpropagator{}{\genericmass}\doteq\spinorpropagator{+}{\genericmass}-\spinorpropagator{-}{\genericmass}\colon\sections{tc}{\spinorbundle}\to\sections{}{\spinorbundle}  \, .$
As a by-product, the space of smooth, space-like compact solutions of the Dirac equation, denoted by $\sol{\genericmass}\doteq\spinorpropagator{}{\genericmass}\big(\sections{\Compact}{\spinorbundle}\big)$ is isomorphic to
\begin{gather}\label{Characterization of the space of smooth space-like solutions}
\sol{\genericmass}\simeq
\frac{\sections{\Compact}{\spinorbundle}}{ \Dirac{\genericmass}\sections{\Compact}{\spinorbundle}}.
\end{gather}
Now, for any $\psi_1,\psi_2\in\sol{\genericmass}$, let $\spinorscalarproduct{\cdot}{\cdot}$ be the scalar product defined by
\begin{gather}\label{spinor scalar product}
\spinorscalarproduct{\psi_1}{\psi_2}\doteq
\int_\Cauchysurface\fiberpairing{\adjunction{\psi_1}}{\gamma^\mu{n_\mu}\psi_2}\volumeformSigma,
\end{gather}
where $\Cauchysurface$ is any, but fixed Cauchy surface with future pointing unit normal $n$.
Throughout the paper the subscript $(s)$ (resp. $(c)$) denotes spinor (resp. cospinor) quantities.
Since $\psi_1,\ \psi_2\in \sol{\genericmass}$, the scalar product does not depend on the choice of $\Cauchysurface$ (we refer to \cite{aqft0} for the details).
We denote with $\Hilbertspinor{\genericmass}$ the Hilbert space obtained by completing $\sol{\genericmass}$ with respect to $\spinorscalarproduct{\cdot}{\cdot}$. By duality, the space $\dualsol{\genericmass}$ of space-like compact and smooth sections of $\cospinorbundle$ such that $\adjointWRTfiberpairing{\Dirac{\genericmass}}\phi=0$ is isomorphic to
\begin{equation}\label{iso2}
\dualsol{\genericmass}\simeq \sections{\Compact}{\cospinorbundle} / \adjointWRTfiberpairing{\Dirac{\genericmass}}\sections{\Compact}{\cospinorbundle}.
\end{equation}
Similarly we denote $\Hilbertcospinor{\genericmass}$ the Hilbert space obtained as the completion of $\dualsol{\genericmass}$ with respect to $\cospinorscalarproduct{\cdot}{\cdot} \doteq \spinorscalarproduct{A\cdot}{A\,\cdot}$. 

\subsection{Algebra of Dirac fields}
Having under control the dynamics of a classical Dirac fields, we are ready to quantize it. Our
quantization scheme is based on the so-called algebraic approach to quantum field theory, initially
developed by Haag and Kastler in Minkowski spacetime \cite{HK} and later extended to curved
backgrounds by Dimock \cite{Di80}. To introduce the algebraic approach to Dirac fields, we also
profit from\cite{dimock, FVD, aqft3, KOD, ZAD, aqft2}.

First of all we introduce the unital, universal tensor algebra over $\complex$, 
$$\algebra{\genericmass}\doteq\bigoplus_{n\in\natural}\labelspace{\genericmass}^{\otimes n}\, ,$$
where $\labelspace{\genericmass}^{\otimes 0}\equiv\complex$ and $
\labelspace{\genericmass}\doteq
\sol{\genericmass}\oplus\dualsol{\genericmass}$, equipped with the $*$-operation induced by the anti-unitary involution  
\begin{gather}\label{involution on the label space}
\involution: \labelspace{\genericmass}\to\labelspace{\genericmass} \qquad \involution(\psi\oplus\phi)\doteq A^{-1}{\phi}\oplus\adjunction{\psi} \, ,
\end{gather}
extended per anti-linearity to the whole $\algebra{\genericmass}$. Here $A$ is the adjunction map introduced in \eqref{adjunction map}.  $\algebra{\genericmass}$ can be equipped with a natural topology as follows:  We require that every $f_j=\oplus_n f_j^n$ converges to $\oplus_n f^n$ in $\labelspace{\genericmass}$ with respect to the limit Fr\'echet topology and that there exists $N\in \mathbb{N}$ such that $f_j^n$ vanishes for every $n > N$ and for every $j$. 
We equip $\labelspace{\genericmass}$ with the  scalar product 
$$ \scalarproduct{\psi_1\oplus\phi_1}{\psi_2\oplus\phi_2}\doteq \spinorscalarproduct{\psi_1}{\psi_2}+ \cospinorscalarproduct{\phi_1}{\phi_2} ,$$
for any $\psi_1, \psi_2 \in \sol{\genericmass} $ and $ \phi_1, \phi_2 \in \dualsol{\genericmass}$.  The \textbf{algebra of Dirac fields} on $\spacetime$ is the unital, topological $*$-algebra obtained by the quotient
\begin{gather}\label{off-shell algebra}
\algebraF{\genericmass}\doteq\frac{\algebra{\genericmass}}{\ideal},
\end{gather}
where $\ideal$ is the $*$-ideal generated by the abstract elements that satisfy the canonical anti-commutation relations 
$\anticommutator{\psi_1\oplus\phi_1}{\psi_2\oplus\phi_2}-\scalarproduct{\psi_1\oplus\phi_1}{\psi_2\oplus\phi_2}\unit{\genericmass}$, where $\unit{\genericmass}$ is the unit element of $\algebraF{\genericmass}$.
Notice that, since $\involution$ is anti-unitary, the action of its extension to $\algebra{\genericmass}$ descends to $\algebraF{\genericmass}$. The algebra of Dirac fields~\eqref{off-shell algebra} is an example of self-dual CAR algebra. For more details see \cite{araki}. 

\subsection{Quasi-free states}\label{quasi-free states}

The quest to find a representation of the algebra of Dirac fields on a suitable Hilbert
space is a difficult aspect of the algebraic approach to quantum field theory, which relies on the notion of a state. On $\algebraF{\genericmass}{}$ this is
nothing but a complex linear functional $\state:\algebraF{\genericmass}{}\to\complex$, which is normalized, $\omega(\unit{\genericmass}) = 1 $, and  positive, $\state(a^*a) \geq 0$ for all $a\in\algebraF{\genericmass}.$ Once a state has been fixed, the $*$-algebra can be represented in terms of linear operators
on a Hilbert space. This is, indeed, a consequence of the renown GNS representation theorem for unital $*$-algebras. We refer to \cite{DHP, IgorValter} for more details.
Due to the natural grading on $\algebraF{\genericmass}{}$, a state is specified once its $n$-points functions
\begin{gather*}\label{n-points functions}
\state_n( f_1, \ldots, f_n)\doteq\state(f_1\otimes\ldots\otimes f_n)\qquad f_j\in\labelspace{\genericmass}\qquad j=1,\ldots, n, 
\end{gather*}
are assigned.
We focus on \textbf{quasi-free (or Gaussian) states}, namely those  whose $n$-point functions vanish for odd $n$, while for even $n$, they are defined as
\[ \state_n( [f_1], \ldots, [f_n])=\sum\limits_{\sigma \in S'_n} (-1)^{\text{\rm{sign}}(\sigma)} \prod\limits_{i=1}^{n/2}
\;\omega_2 \big([f_{\sigma(2i-1)}], [f_{\sigma(2i)}] \big) \:, \]
where~$S'_n$ denotes the set of ordered permutations of $n$ elements.
On account of the isomorphisms \eqref{Characterization of the space of smooth space-like solutions} and \eqref{iso2}, we can uniquely associate to $\state_2$ a bidistribution in $\Big(\Gamma_c{\big(\spinorbundle\oplus\cospinorbundle\big)^2}\Big)'$ by means of  the relation
$$\widetilde{\state_2}(u\oplus v,u'\oplus v')\doteq
\state_2(\psi_u\oplus\phi_v,\psi_{u'}\oplus\phi_{v'}),
$$ 
where $\psi_{u^{(\prime)}}\doteq\spinorpropagator{}{\genericmass}u^{(\prime)}$, $\phi_{v^{(\prime)}}\doteq\cospinorpropagator{}{\genericmass}v^{(\prime)}$ where $u,u'\in\sections{\Compact}{\spinorbundle}$, $v,v'\in\sections{\Compact}{\cospinorbundle}$ while $\spinorpropagator{}{\genericmass}$, $\cospinorpropagator{}{\genericmass}$ are the causal propagator for $\Dirac{\genericmass}$ and $\Dirac{\genericmass}^\star$ respectively (see section \ref{space of solutions}).
A characterization of the quasi-free states on $\algebraF{\genericmass}$ was obtained by H. Araki in \cite{araki}: 

\begin{lem}\label{Araki's Theorem}
Let $Q$ be a linear bounded operator on $\Hilbert{\genericmass}\doteq\mathcal{H}_{(s),\beta}\oplus\mathcal{H}_{(c),\beta}$ satisfying:
\begin{gather}\label{Araki's conditions on S}
0\leq Q=Q^*\leq 1,\qquad
Q+\involution Q\involution = I_{\Hilbert{\genericmass}}.
\end{gather}
Then
\begin{gather*}\label{Araki's conditions on S I}
\state_2(\involution f,g)=\scalarproduct{f}{Qg}\qquad\forall f,g\in\labelspace{\genericmass},
\end{gather*}
defines a quasi-free state on $\algebraF{\genericmass}$. Conversely, for every quasi-free state on $\algebraF{\genericmass}$ there exists a bounded linear operator  $Q$ on $\Hilbert{\genericmass}$ fulfilling  \eqref{Araki's conditions on S}.
\end{lem}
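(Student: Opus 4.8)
The plan is to establish the two implications separately. Denote by $B(f)$ the image in $\algebraF{\genericmass}$ of a label $f\in\labelspace{\genericmass}$; by construction $f\mapsto B(f)$ is complex-linear, $B(\involution f)=B(f)^*$, and $\anticommutator{B(f)}{B(g)}=\scalarproduct{f}{g}\,\unit{\genericmass}$. I regard the converse implication as the routine one and the direct one as analytically substantive.

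For the converse, suppose $\state$ is quasi-free and introduce the sesquilinear form $\Phi(f,g)\doteq\state_2(\involution f,g)=\state\big(B(f)^*B(g)\big)$ on $\labelspace{\genericmass}\times\labelspace{\genericmass}$. First I would show that it is bounded for the norm of $\Hilbert{\genericmass}$: positivity of $\state$ makes $(a,b)\mapsto\state(a^*b)$ a positive semidefinite form, so the Cauchy--Schwarz inequality gives $|\Phi(f,g)|^2\le\state\big(B(f)^*B(f)\big)\,\state\big(B(g)^*B(g)\big)$, while the anti-commutation relation together with $\state\ge 0$ yields $\state\big(B(f)^*B(f)\big)\le\scalarproduct{f}{f}$, because $\state\big(B(f)^*B(f)\big)+\state\big(B(f)B(f)^*\big)=\scalarproduct{f}{f}$ with both summands non-negative. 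Hence $|\Phi(f,g)|\le\|f\|\,\|g\|$ and Riesz' representation theorem furnishes a unique bounded $Q$ with $\Phi(f,g)=\scalarproduct{f}{Qg}$. The remaining properties are then read off directly: $\scalarproduct{f}{Qf}=\state\big(B(f)^*B(f)\big)\ge 0$ forces $0\le Q=Q^*$; $\scalarproduct{f}{(I_{\Hilbert{\genericmass}}-Q)f}=\state\big(B(f)B(f)^*\big)\ge 0$ forces $Q\le I_{\Hilbert{\genericmass}}$; and feeding the two-point function into $\state(\anticommutator{B(f)}{B(g)})=\scalarproduct{f}{g}$, using that $\involution$ is anti-unitary with $\involution^2=\mathrm{id}$, yields exactly $Q+\involution Q\involution=I_{\Hilbert{\genericmass}}$.

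For the direct implication, given $Q$ as in the statement I would define $\state$ on the free tensor algebra $\algebra{\genericmass}$ by $\state_2(\involution f,g)\doteq\scalarproduct{f}{Qg}$, extending it to the even $n$-point functions through the quasi-free (fermionic Wick/Pfaffian) prescription, letting the odd ones vanish and setting $\state(\unit{\genericmass})=1$; linearity and normalisation are then immediate. The next step is to verify that $\state$ descends to $\algebraF{\genericmass}=\algebra{\genericmass}/\ideal$, that is, annihilates $\ideal$. Reversing the computation above, the hypothesis $Q+\involution Q\involution=I_{\Hilbert{\genericmass}}$ is precisely what makes $\state_2(f,g)+\state_2(g,f)=\scalarproduct{f}{g}$, so $\state$ kills the generators $\anticommutator{f}{g}-\scalarproduct{f}{g}\unit{\genericmass}$ of $\ideal$; since the higher $n$-point functions are Pfaffians built out of $\state_2$, this compatibility propagates to the whole two-sided ideal.

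The hard part is positivity, $\state(a^*a)\ge 0$ for every $a\in\algebraF{\genericmass}$, which is invisible at the two-point level alone. Rather than estimating Pfaffians combinatorially, I would prove it by exhibiting a genuine representation. Using $0\le Q=Q^*\le I_{\Hilbert{\genericmass}}$ one forms the square roots $Q^{1/2}$ and $(I_{\Hilbert{\genericmass}}-Q)^{1/2}$ and, on a fermionic Fock space, represents $B(f)$ as a Bogoliubov combination of a creation and an annihilation operator built from these; the self-duality relation $Q+\involution Q\involution=I_{\Hilbert{\genericmass}}$ is exactly the input that makes the represented fields obey the canonical anti-commutation relations and satisfy $B(\involution f)=B(f)^*$. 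As this is a $*$-representation by bounded operators on a Hilbert space, the vacuum expectation $\state=\scalarproduct{\Omega}{(\cdot)\,\Omega}$ is automatically positive, and a short computation identifies its two-point function with $\scalarproduct{f}{Qg}$ and confirms that it is quasi-free. The only genuine obstacle is thus choosing the correct Bogoliubov Ansatz and checking the CAR for the represented fields; since the result is classical, one could alternatively appeal directly to Araki's analysis in \cite{araki}.
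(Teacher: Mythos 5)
The paper does not prove this lemma at all: it is quoted verbatim from Araki's work \cite{araki} and used as a black box, so there is no internal proof to compare your argument against. Measured against Araki's own argument, your treatment of the converse direction is essentially correct and complete: Cauchy--Schwarz for the positive form $(a,b)\mapsto\state(a^*a b)$ (rather, $\state(a^*b)$), the bound $\state(B(f)^*B(f))\le\scalarproduct{f}{f}$ from splitting the anticommutator into two non-negative terms, Riesz representation, and the identification of $Q+\involution Q\involution=I_{\Hilbert{\genericmass}}$ with the CAR relation all go through, provided you consistently use the self-dual convention $\anticommutator{B(f)^*}{B(g)}=\scalarproduct{f}{g}\unit{\genericmass}$ (the paper's bilinear-looking formula only makes sense with the involution absorbed into the first slot, since $\scalarproduct{\cdot}{\cdot}$ is sesquilinear).

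The one genuine thin spot is the positivity step in the existence direction. Your Ansatz of representing $B(f)$ on a single fermionic Fock space as a Bogoliubov combination built from $Q^{1/2}$ and $(I_{\Hilbert{\genericmass}}-Q)^{1/2}$ only closes under the CAR when $Q$ is a basis projection, i.e.\ $Q^2=Q$; in that case the state is pure and lives on $\cF(Q\Hilbert{\genericmass})$. For general $0\le Q\le I_{\Hilbert{\genericmass}}$ the cross terms $Q^{1/2}(I_{\Hilbert{\genericmass}}-Q)^{1/2}$ do not cancel, and Araki's construction instead purifies: one passes to the doubled space $\Hilbert{\genericmass}\oplus\Hilbert{\genericmass}$ with a suitably extended involution, builds a genuine basis projection there out of $Q^{1/2}$ and $(I_{\Hilbert{\genericmass}}-Q)^{1/2}$, takes the associated Fock state, and restricts to the original algebra. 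Since the paper only ever applies the lemma to $Q$'s assembled from spectral projectors (so $Q^2=Q$), your sketch covers the cases actually used, but as a proof of the lemma as stated you should either add the doubling argument or, as you suggest, cite \cite{araki} for it --- which is exactly what the authors do.
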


It is widely accepted that, among all possible states, the physical ones are required to satisfy the Hadamard condition  \cite{Radzikowski, BFK, ChrisVerch}, namely a state $\state$ on $\algebra{\genericmass}$ is said to be of  \textbf{Hadamard form} if and only if the two-point bidistribution $\widetilde{\state_2}$ satisfies:
\begin{gather*}\label{Microlocal Spectrum Condition}
\WaveFrontSet{\widetilde{\state_2}}=\{(x,y,k_x,-k_y)\in\cotangentbundle^2\backslash\{0\}|\ (x,k_x)\sim(y,k_y),\ k_x\rhd 0\},
\end{gather*}
where $(x,k_x)\sim(y,k_y)$ means that there is a null geodesic $\gamma$ connecting $x$ to $y$, such that $k_x$ is cotangent to $\gamma$ at $x$ and $k_y$ is the coparallel transport along $\gamma$ of $k_x$ from $x$ to $y$. In addition, $k_x\rhd 0$ selects future pointing covectors. 
Since we deal with vector-valued distributions, the standard convention for the wavefront set is to take the union of the wavefront set of its components in an arbitrary but fixed local frame. It turns out that this definition does not depend on the chosen local frame, see \cite{SahlmannVerch} .
For completeness and historical reasons, we remark that this condition can also be characterized in any geodesically convex neighborhood. We refer to \cite{Fulling1, Fulling2} for more details.

\section{Fermionic Projector}\label{Fermionic Projector state}

The aim of the section is to focus on the construction of the so-called \textbf{Fermionic Projector}. To achieve our goal, we refer to an earlier work due to Finster and Reintjes \cite{Felix2}, though we also benefit from \cite{simo}.

Let $\alpha\in\mathbb{R}\setminus \{0\}$ and $\massinterval\subseteq\real\setminus\{0\}$ be an open interval containing $\massofinterest$. Consider an arbitrary but fixed Cauchy surface $\Cauchysurface$ and let $\Psi^0\in C^\infty_{\Compact}(\Cauchysurface\times\massinterval,\complex^4)$ be a smooth compactly supported function on $\Cauchysurface\times\massinterval$. Solving for every $\genericmass \in I_\alpha$ the Cauchy problem
\begin{align*}
\Dirac{\genericmass}\Psi(x,\beta) =0 \qquad \Psi(x,\beta) \big|_\Sigma = \Psi^0 \, ,
\end{align*}
we obtain a family $\Psi$ of solutions of the Dirac equation for a variable mass parameter $\beta\in I_\alpha$ in the class $C^\infty_{\SpacelikeCompact,\Compact}(\spacetime\times I_\alpha, \complex^4)$.
We endow the $\complex$-vector space built out of such families of solutions with the following scalar product
\begin{gather}\label{Direct sum scalar product}
\directsumscalarproduct{\Psi_1}{\Psi_2}\doteq\int_{\massinterval}\spinorscalarproduct{\Psi_{1,\genericmass}}{\Psi_{2,\genericmass}}\textrm{d}\genericmass,
\end{gather}
where $\textrm{d}\genericmass$ is the Lebesgue measure and $\spinorscalarproduct{\cdot}{\cdot}$ is the scalar product defined in \eqref{spinor scalar product}, which involves integration over $\Cauchysurface$.
Taking the completion of such space, we obtain the Hilbert space $\HilbertDirectSum$: It contains measurable families $\Psi\doteq (\Psi_\beta)_{\beta \in I_\alpha}$ such that $\Psi_\genericmass\in\Hilbertspinor{\genericmass}$ for almost all $\genericmass\in\massinterval$ and $\Psi_{\genericmass}|_{\Cauchysurface}$ is square integrable on any Cauchy surface $\Cauchysurface$.
We denote with $\SequenceSpinorm{\cdot}$ the norm of $\HilbertDirectSum$.\\
In $\HilbertDirectSum$ we can distinguish two different dense subspaces.
The first one, denoted $\soldirectsum$, is the collection of of families $\widetilde{\Psi}\doteq (\widetilde{\Psi}_\beta)_{\beta \in I_\alpha}$ such that $\widetilde{\Psi}_\genericmass\in\sol{\genericmass}$ for almost all $\genericmass\in\massinterval$.
The second one, denoted $\Hilbertspinor{\infty}\doteq C^\infty_{\SpacelikeCompact,\Compact}(\spacetime\times\massinterval,\complex^4)\cap\HilbertDirectSum$, is built out of families $\Psi\doteq (\Psi_\beta)_{\beta \in I_\alpha}$ which are compact in the mass parameter.
On $\HilbertDirectSum$, we introduce two self-adjoint operators: The first one multiplies by $\genericmass$
\begin{gather*}
\MassMultiplicativeOperator:\HilbertDirectSum\to\HilbertDirectSum \qquad (\MassMultiplicativeOperator\Psi)_\genericmass \doteq \genericmass \Psi_\genericmass\, .
\end{gather*}
while the second one, denoted as \textbf{smearing operator}, integrates over $\genericmass$
\begin{gather}\label{smearing operator}
\SmearingMassOperator\colon\Hilbertspinor{\infty}\to\sections{\SpacelikeCompact}{\spinorbundle}\qquad
\SmearingMassOperator\Psi(x)\doteq\int_{\massinterval}\Psi_{\genericmass}(x)\textrm{d}\genericmass.
\end{gather}
Notice that, even if $\Psi\in\soldirectsum$, $\SmearingMassOperator\Psi\not\in \sol{\genericmass}$ regardless the choice of  $\genericmass$.
The smearing operator plays a central role since it allows to construct the sequilinear form
\begin{gather}\label{Mass Smeared Inner Product}
\MassSmearedinnerproductOperatorNotation\colon\domain{\MassSmearedinnerproductOperatorNotation}\to\complex,\qquad
\MassSmearedinnerproductOperatorNotation(\Psi_1,\Psi_2)\doteq
\MassSmearedinnerproduct{\Psi_1}{\Psi_2}\doteq
\bracket{\SmearingMassOperator\Psi_1}{\adjunction{}\SmearingMassOperator\Psi_2},
\end{gather}
with $\domain{\MassSmearedinnerproductOperatorNotation}\doteq\{(\Psi_1,\Psi_2) \in \HilbertDirectSum \times \HilbertDirectSum |\ \MassSmearedinnerproduct{\Psi_1}{\Psi_2}\textrm{exists finite}\}\subseteq\Hilbertspinor{\infty}\times\Hilbertspinor{\infty}$, where $\bracket{}{}$ is defined in \eqref{bracket}. 

\begin{defn}[\cite{Felix2}]\label{Definition of WMOP}
The Dirac operator $\Dirac{\alpha}$ on $(\spacetime,g)$ has the \textbf{weak mass oscillation property} (WMOP) in the interval $\massinterval$ with domain $\Hilbertspinor{\infty}$ if for any $\Psi_1\in\Hilbertspinor{\infty}$ 
\begin{itemize}
\item[1)] $\exists\, C=C(\Psi_1) >0 $ such that
\begin{gather}\label{WMOP}
|\MassSmearedinnerproduct{\Psi_1}{\Psi_2}|\leq C(\Psi_1) \SequenceSpinorm{\Psi_2}\qquad\forall\Psi_2\in\Hilbertspinor{\infty}
\end{gather}
\item[2)] it holds
\begin{gather}\label{Commutation with MassMultiplicativeOperator}
\innerproduct{\SmearingMassOperator\MassMultiplicativeOperator\Psi_1}{\SmearingMassOperator\Psi_2}=
\innerproduct{\SmearingMassOperator\Psi_1}{\SmearingMassOperator\MassMultiplicativeOperator\Psi_2}
\qquad\forall\Psi_1,\Psi_2\in\Hilbertspinor{\infty}.
\end{gather} 
\end{itemize} 
\end{defn} 
\noindent Whenever $\MassSmearedinnerproductOperatorNotation$ satisfies the WMOP, we can represent it as the linear symmetric operator  $\FermionicOperator\colon\Hilbertspinor{\infty}\to\HilbertDirectSum$, namely
\begin{equation}\label{Defining property of the Fermionic Operator}
\MassSmearedinnerproductOperatorNotation(\Psi_1,\Psi_2)=
\directsumscalarproduct{\FermionicOperator\Psi_1}{\Psi_2} \, .
\end{equation}
We stress that the integral on the left hand side of equation \eqref{Defining property of the Fermionic Operator} is over $\spacetime$ while the one on the right hand side is taken on an arbitrary but fixed Cauchy surface $\Cauchysurface$.
We refer to $\FermionicOperator$ as the \textbf{Fermionic Signature operator}. 
We want to underline that up to this stage, we are not able to define a self-adjoint operator on $\Hilbertspinor{\massofinterest}$. Therefore a stronger condition is needed.

\begin{defn}[\cite{Felix2}]\label{Definition of SMOP}
The Dirac operator $\Dirac{\alpha}$ on $(\spacetime,g)$ possesses the \textbf{strong mass oscillation property} (SMOP) in the interval $\massinterval$ with domain $\Hilbertspinor{\infty}$ if there exists a constant $C>0$ such that
\begin{gather}\label{SMOP}
|\MassSmearedinnerproduct{\Psi_1}{\Psi_2}|\leq C\int_{\massinterval}\spinorm{\Psi_{1,\genericmass}}\spinorm{\Psi_{2,\genericmass}}\textrm{d}\genericmass
\qquad\forall\Psi_1,\Psi_2\in\Hilbertspinor{\infty}.
\end{gather}
\end{defn}
\noindent The SMOP plays a crucial role, since it is equivalent to the following two conditions:
\begin{itemize}
\item[(i)] for any $\Psi_1$, $\Psi_2\in\Hilbertspinor{\infty}$ it holds (\ref{Commutation with MassMultiplicativeOperator}) and
\begin{gather}\label{reduced SMOP}
|\MassSmearedinnerproduct{\Psi_1}{\Psi_2}|\leq C\SequenceSpinorm{\Psi_1}\SequenceSpinorm{\Psi_2};
\end{gather}
\item[(ii)] there exists a family $(\FermionicProjectionOperator{\genericmass})_{\genericmass\in\massinterval}$, where each $\FermionicProjectionOperator{\genericmass}$ acts on $\Hilbertspinor{\genericmass}$ as a self-adjoint linear bounded operator, such that
\begin{subequations}
\begin{gather}
\label{weak continuity of associated family related to the Fermionic Operator}
\sup_{\genericmass\in\massinterval}\|\FermionicProjectionOperator{\genericmass}\|<+\infty,\quad
\genericmass\mapsto\spinorscalarproduct{\Psi_{1,\genericmass}}{\FermionicProjectionOperator{\genericmass}\Psi_{2,\genericmass}}\textrm{ is continuous }
\forall\Psi_1,\Psi_2\in\Hilbertspinor{\infty}\\
\label{diagonalisation due to SMOP}
\directsumscalarproduct{\FermionicOperator\Psi_1}{\Psi_2}=
\int_{\massinterval}\spinorscalarproduct{\FermionicProjectionOperator{\genericmass}\Psi_{1,\genericmass}}{\Psi_{2,\genericmass}}\dvol \genericmass
\qquad\forall\Psi_1,\Psi_2\in \HilbertDirectSum.
\end{gather}
\end{subequations}
\end{itemize}
For the technical details, we refer to Theorem 4.2 and to Proposition 4.4 in \cite{Felix2}.
As a direct consequence of \eqref{reduced SMOP}, the SMOP implies the WMOP. Moreover, in formula (\ref{diagonalisation due to SMOP}) a bounded, symmetric operator $\FermionicProjectionOperator{\massofinterest}$ on $\Hilbertspinor{\massofinterest}$ is defined.
Using spectral calculus, we can define  the \textbf{Fermionic Projector} to be 
$$\projector_{\textrm{FP}}\doteq\chi(\FermionicProjectionOperator{\massofinterest}) = \int_{\sigma(\FermionicProjectionOperator{\massofinterest})}\chi(\lambda)\, \textrm{d}\mu_\lambda\, : \Hilbertspinor{\massofinterest} \to \Hilbertspinor{\massofinterest}\, ,$$
where $\chi$ is the Heaviside step function, while $\textrm{d}\mu_\lambda$ is spectral measure associated to $\FermionicProjectionOperator{\massofinterest}$. 

\begin{rem}
Out of the Fermionic Projector, we can construct a quasi-free state on the algebra of Dirac fields. This can be done by defining the operator $Q\doteq \projector_{\textrm{FP}}\oplus(I_{\Hilbertcospinor{\genericmass}}-\adjunction{}\projector_{\textrm{FP}}\inverse{\adjunction{}})$ on $\Hilbert{\genericmass}$. By direct inspection it satisfies  both conditions in (\ref{Araki's conditions on S}); hence, applying Lemma \ref{Araki's Theorem}, we obtain a quasi-free state denoted by $\FermionicProjectorState$. We refer to it as \textbf{FP state}.\\ On a generic spacetime, this construction can be thought as a generalization of the frequency splitting for the Hamiltonian of the theory and on Minkowski spacetime, the usual vacuum state is recovered.
In general, it is not clear if the state $\FermionicProjectorState$ will be Hadamard or not. However, there are already explicit cases \cite{simo, Felix5} in which the construction is known to enjoy this property. 
\end{rem}

\section{Modified Fermionic Projector state}\label{Modified Fermionic Projector state}

From the mathematical point of view, the construction of $\FermionicProjectorState$ depends strongly on the SMOP, which must be checked case-by-case on each spacetime. It would be desirable to individuate a weaker requirement for implementing the whole procedure. 
In the following we show that this indeed possible, though we have to rely on a non-canonical construction. This is our main result.
  The key idea is to avoid the SMOP by defining a continuous immersion $\Hilbertspinor{\massofinterest}\hookrightarrow\HilbertDirectSum$ through a suitable bounded map $\SequenceMoller$. Composing the map $\SmearingMassOperator$ described in \eqref{smearing operator} with $\SequenceMoller$, leads to a modified version both of the SMOP and of the WMOP.
In particular, the modified MOPs are now formulated directly on $\Hilbertspinor{\massofinterest}$.
Thus it will be enough to check the latter propriety to define a pure quasi-free state on $\algebraF{\massofinterest}$.

\subsection{Sequence intertwining operator.}
In this section we introduce the embeddings $\SequenceMoller:\Hilbertspinor{\massofinterest}\to\HilbertDirectSum$
which are nothing but a direct sum of ``M\o ller type'' maps.
These are well known in the literature and they allow to intertwine the dynamics of two Green hyperbolic operators differing by a smooth potential.
The M\o ller map was used by Peierls \cite{Peierls} as a general procedure to define the Poisson brackets for the algebra of observables.
Results on the existence of M\o ller operators can be found in \cite{Brunetti e Fredenhagen, Dutsch Fredenhagen, nico1} and references therein. 
With this in mind, we introduce first the intertwining map $\Moller{\genericmass,\massofinterest}:\Hilbertspinor{\massofinterest}\to\Hilbertspinor{\genericmass}$, $\genericmass\in\massinterval$, eventually describing its direct sum as a map $\SequenceMoller:\Hilbertspinor{\massofinterest}\to\HilbertDirectSum$.
Consider two Cauchy surfaces $\Cauchysurface_\pm$ such that $\Cauchysurface_+$ lies in the future of $\Cauchysurface_-$.
Let $\rho^+\in C^\infty(\real)$ be a non decreasing function such that $\rho^+|_{J^+(\Cauchysurface_+)}=1$ and $\rho^+|_{J^-(\Cauchysurface_-)}=0$. We introduce $\rho^-=1-\rho^+$.
For any $\psi\in\sol{\massofinterest}$, we define $\Moller{\genericmass,\massofinterest}\psi\in\sol{\genericmass}$ as follows.
First consider the unitary operator which maps the Cauchy data $\psi|_{\Cauchysurface_-}$ to the corresponding Cauchy data on $\Cauchysurface_+$ by evolving it via the dynamics ruled by the Dirac equation with mass $\interpolatingmass(\genericmass,\massofinterest)=\genericmass\rho^++\massofinterest\rho^-$. Secondly define $\Moller{\genericmass,\massofinterest}\psi$ as the solution of the Dirac equation with mass $\genericmass$ and Cauchy data provided by those previously obtained on $\Cauchysurface_+$.
The whole procedure can be described explicitly as the composition of the following maps:
\begin{gather*}
\Moller{\interpolatingmass,\massofinterest}^+=
\textrm{Id}-
\spinorpropagator{+}{\interpolatingmass}(\genericmass-\massofinterest)\rho^+,\qquad
\Moller{\genericmass,\interpolatingmass}^-=
\textrm{Id}-
\spinorpropagator{-}{\genericmass}(\massofinterest-\genericmass)\rho^-,\qquad
\Moller{\genericmass,\massofinterest}=
\Moller{\genericmass,\interpolatingmass}^-\circ\Moller{\interpolatingmass,\massofinterest}^+,
\end{gather*}
where $E^+_\interpolatingmass$ denotes the advanced propagator for the Dirac equation with mass $\interpolatingmass$.
Note that $(\genericmass-\massofinterest)\rho^+$, $(\massofinterest-\genericmass)\rho^-$, are past- and future-compact respectively; thus the composition with $\propagator{+}{\interpolatingmass}$, $\propagator{-}{\genericmass}$ is well defined.\\
The map $\Moller{\massofinterest,\genericmass}$ is thus densely defined as a linear unitary map from $\sol{\massofinterest}$ to $\sol{\genericmass}$, whose extension will be denoted again by $\Moller{\massofinterest,\genericmass}$.
We define $\SequenceMoller:\Hilbertspinor{\massofinterest}\to\HilbertDirectSum$ by
\begin{gather}\label{Sequence Moller operator}
\SequenceMoller\psi(\genericmass)\doteq
\Moller{\genericmass,\massofinterest}\psi\in\Hilbertspinor{\genericmass}
\qquad
\forall\genericmass\in\massinterval,\quad
\forall\psi\in\HilbertDirectSum.
\end{gather}
In view of the continuous dependence of the solutions of the Dirac equation on the mass parameter (see \cite{Taylor}), the function $\genericmass\mapsto\Moller{\genericmass,\massofinterest}\psi$ is integrable in the sense required by the definition of $\HilbertDirectSum$: Furthermore
\begin{gather*}\label{bound for the Sequence Moller}
\norm{\SequenceMoller\psi}{\massinterval}^2=
\int_{\massinterval}\spinorm{\Moller{\genericmass,\massofinterest}\psi}^2\textrm{d}\genericmass=
|\massinterval|\spinorm{\psi}^2<\infty,
\end{gather*}
thus proving that $\SequenceMoller\psi\in\HilbertDirectSum$ and that $\SequenceMoller$ is an almost isometric linear bounded operator from $\Hilbert{\massofinterest}$ to $\HilbertDirectSum$, with $\norm{\SequenceMoller}{}=\sqrt{|\massinterval|}$.
We refer to $\SequenceMoller$ as the \textbf{sequence intertwining operator}. 

\subsection{Modified Mass Oscillation properties.} 
The sequence intertwining operator \eqref{Sequence Moller operator} allows formulating  the mass oscillation properties directly on $\Hilbertspinor{\massofinterest}$.
Notice that, for any $\psi\in\Hilbertspinor{\massofinterest}$, the element $\SequenceMoller\psi$ lies in $\soldirectsum$ but a priori not in $\Hilbertspinor{\infty}$ since we have no control on the support properties of $\SequenceMoller\psi$ as a function of $\genericmass$.
Thus, in order to make contact with the Definitions \ref{Definition of WMOP}  and \ref{Definition of SMOP}, we localize $\SequenceMoller\psi$ with an arbitrary smooth, compactly supported function $\MassFunction{}\in C^\infty_{\Compact}(\massinterval)$. 

\begin{defn}\label{definizione di mWMOP e mSMOP}
The Dirac operator $\Dirac{\massofinterest}$ on $(\spacetime,g)$ has the \textbf{modified weak mass oscillation property} (mWMOP) in the interval $\massinterval$ with domain $\sol{\massofinterest}$ if, for any $\MassFunction{}\in C^\infty_{\Compact}(\massinterval)$ and $\psi_1\in\sol{\massofinterest}$, there exists a constant $C(\MassFunction{},\psi_1) >0 $ such that
\begin{gather}\label{mWMOP}
|\MassSmearedinnerproduct{\MassFunction{}\SequenceMoller\psi_1}{\MassFunction{}\SequenceMoller\psi_2}|\leq C(\MassFunction{},\psi_1)\spinorm{\psi_2}\qquad\forall\psi_2\in\sol{\massofinterest}.
\end{gather}
Similarly, $\Dirac{\alpha}$ on $(\spacetime,g)$ has the \textbf{modified strong mass oscillation property} (mSMOP) in the interval $\massinterval$ with domain $\sol{\massofinterest}$ if for any $\MassFunction{}\in C^\infty_{\Compact}(\massinterval)$ there exists a constant $C(\MassFunction{})>0$ such that
\begin{gather}\label{mSMOP}
|\MassSmearedinnerproduct{\MassFunction{}\SequenceMoller\psi_1}{\MassFunction{}\SequenceMoller\psi_2}|\leq C(\MassFunction{})\spinorm{\psi_1}\spinorm{\psi_2}\qquad
\forall\psi_1,\psi_2\in\sol{\massofinterest}.
\end{gather}
\end{defn}
\begin{rem}\label{proposition on relations between MOPs and mMOps}
We want to stress that comparing Definition \ref{definizione di mWMOP e mSMOP} with Definitions \ref{Definition of WMOP} and \ref{Definition of SMOP}, we avoid the commutation property (\ref{Commutation with MassMultiplicativeOperator}). This is a requirement used to show the equivalence between the different formulations of the SMOP (\ref{SMOP}-\ref{reduced SMOP}) and it plays no role in our construction.
\end{rem} 

We  describe now the relation between the new modified MOP and the properties introduced in Definition \ref{Definition of WMOP} and \ref{Definition of SMOP}.
	\begin{cor}
The modified mass oscillation properties are weaker requirements than the mass oscillation property, namely the following diagram holds
\begin{displaymath}
\xymatrix{
\textrm{SMOP}\ar@{=>}[d]&\ar@{<=}[l]\ar@{=>}[d]\textrm{WMOP}\\
\textrm{mSMOP}&\ar@{<=}[l]\textrm{mWMOP} \,.
}
\end{displaymath}
\end{cor}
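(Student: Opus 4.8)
The plan is to establish the four implications encoded in the diagram separately, treating the two ``horizontal'' arrows ($\textrm{SMOP}\Rightarrow\textrm{WMOP}$ and $\textrm{mSMOP}\Rightarrow\textrm{mWMOP}$) and the two ``vertical'' arrows ($\textrm{SMOP}\Rightarrow\textrm{mSMOP}$ and $\textrm{WMOP}\Rightarrow\textrm{mWMOP}$) with two different, but equally short, arguments. The only structural input needed throughout is that each M\o ller map $\Moller{\genericmass,\massofinterest}$ is unitary, so that $\spinorm{\Moller{\genericmass,\massofinterest}\psi}=\spinorm{\psi}$ for every $\genericmass\in\massinterval$, a fact established during the construction of the sequence intertwining operator \eqref{Sequence Moller operator}.

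For the horizontal arrows the implications are immediate from the definitions. The implication $\textrm{SMOP}\Rightarrow\textrm{WMOP}$ is already noted in Section \ref{Fermionic Projector state} as a consequence of \eqref{reduced SMOP}. Likewise, $\textrm{mSMOP}\Rightarrow\textrm{mWMOP}$ follows by fixing $\psi_1\in\sol{\massofinterest}$ and absorbing the factor $\spinorm{\psi_1}$ on the right-hand side of \eqref{mSMOP} into the constant: setting $C(\MassFunction{},\psi_1)\doteq C(\MassFunction{})\spinorm{\psi_1}$ turns the mSMOP bound into the mWMOP bound \eqref{mWMOP}.

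The vertical arrows require inserting the localized transported solutions into the unmodified inequalities. First I would check the domain condition: for $\MassFunction{}\in C^\infty_{\Compact}(\massinterval)$ and $\psi\in\sol{\massofinterest}$, the family $\MassFunction{}\SequenceMoller\psi$ lies in $\Hilbertspinor{\infty}$. Indeed, $\genericmass\mapsto\Moller{\genericmass,\massofinterest}\psi$ is a spacelike compact smooth solution of $\Dirac{\genericmass}$ depending smoothly on the mass parameter, and multiplication by $\MassFunction{}$ makes the support in $\genericmass$ compact; hence $\MassFunction{}\SequenceMoller\psi\in C^\infty_{\SpacelikeCompact,\Compact}(\spacetime\times\massinterval,\complex^4)\cap\HilbertDirectSum=\Hilbertspinor{\infty}$. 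With this, for $\textrm{SMOP}\Rightarrow\textrm{mSMOP}$ I would insert $\Psi_j=\MassFunction{}\SequenceMoller\psi_j$ into \eqref{SMOP}; since $(\MassFunction{}\SequenceMoller\psi_j)_\genericmass=\MassFunction{}(\genericmass)\Moller{\genericmass,\massofinterest}\psi_j$ and $\spinorm{\Moller{\genericmass,\massofinterest}\psi_j}=\spinorm{\psi_j}$, the right-hand side becomes $C\big(\int_{\massinterval}|\MassFunction{}(\genericmass)|^2\dvol{\genericmass}\big)\spinorm{\psi_1}\spinorm{\psi_2}$, which is exactly \eqref{mSMOP} with $C(\MassFunction{})\doteq C\,\|\MassFunction{}\|^2_{L^2(\massinterval)}$. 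Analogously, for $\textrm{WMOP}\Rightarrow\textrm{mWMOP}$ I would insert $\Psi_1=\MassFunction{}\SequenceMoller\psi_1$ and $\Psi_2=\MassFunction{}\SequenceMoller\psi_2$ into the first WMOP bound \eqref{WMOP}; using $\SequenceSpinorm{\MassFunction{}\SequenceMoller\psi_2}=\|\MassFunction{}\|_{L^2(\massinterval)}\spinorm{\psi_2}$ yields \eqref{mWMOP} with $C(\MassFunction{},\psi_1)\doteq C(\MassFunction{}\SequenceMoller\psi_1)\|\MassFunction{}\|_{L^2(\massinterval)}$. Note that the commutation property \eqref{Commutation with MassMultiplicativeOperator} built into the WMOP is never invoked, consistently with Remark \ref{proposition on relations between MOPs and mMOps}.

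All four steps are elementary once the unitarity of the M\o ller maps is in hand; the only point deserving care, and hence the ``main obstacle'', is the verification that $\MassFunction{}\SequenceMoller\psi\in\Hilbertspinor{\infty}$, i.e.\ that the transported family is jointly smooth in $(x,\genericmass)$ and retains spacelike compact support, so that the hypotheses of Definitions \ref{Definition of WMOP} and \ref{Definition of SMOP} genuinely apply. This rests on the smooth dependence of the solutions of the Dirac equation on the mass parameter and on the support properties of the propagators entering $\Moller{\genericmass,\massofinterest}$.
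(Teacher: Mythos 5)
Your proposal is correct and follows essentially the same route as the paper: the horizontal arrows by absorbing $\spinorm{\psi_1}$ (resp.\ $\SequenceSpinorm{\Psi_1}$) into the constant, and the vertical arrows by substituting $\Psi_j=\MassFunction{}\SequenceMoller\psi_j$ into the unmodified bounds and using $\norm{\MassFunction{}\SequenceMoller\psi}{\massinterval}=\|\MassFunction{}\|_{L^2(\massinterval)}\spinorm{\psi}$, which is exactly the paper's argument (the paper plugs into \eqref{reduced SMOP} where you use \eqref{SMOP}, an immaterial difference). Your explicit check that $\MassFunction{}\SequenceMoller\psi\in\Hilbertspinor{\infty}$ is a welcome extra precision that the paper leaves implicit in its discussion of the localization by $\MassFunction{}$.
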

\begin{proof}
Let be $\Psi_1, \Psi_2 \in \cH_{(s),\infty}$ and let $\psi_1,\psi_2\in\sol{\massofinterest}$. 
The horizontal arrows descend taking $C(\Psi_1) = C||\Psi_1||_{(s),I_\alpha}$ in \eqref{WMOP} and $C(\MassFunction{},\psi_1)= C(\MassFunction{})\spinorm{\psi_1}$ in \eqref{mWMOP}.
In order to show that SMOP$\Rightarrow$mSMOP (and similarly for the weak properties), it is enough to substitute $\Psi_1=\MassFunction{}\SequenceMoller\psi_1$, $\Psi_2=\MassFunction{}\SequenceMoller\psi_2$ in (\ref{reduced SMOP}) and to use\footnote{With a little more effort, one may prove that SMOP implies that, there is a constant $C>0$ such that, for any $\psi,\phi\in\sol{\massofinterest}$, it holds $|\MassSmearedinnerproduct{\SequenceMoller\psi}{\SequenceMoller\phi}|\leq C\spinorm{\psi}{}\spinorm{\phi}{}$. Since this alternative modified SMOP plays no role in the following, we will stick to the ``localized'' definition, which allows a more direct comparison between MOPs and mMOPs.} $\norm{\MassFunction{}\SequenceMoller\psi}{\massinterval}=\|\MassFunction{}\|_{L^2(\massinterval)}\spinorm{\psi}{}$.
This proves that the m-WMOP is a proper weaker requirement than both the SMOP and the WMOP.
\end{proof}

\subsection{New classes of Fermionic Projectors.} 
In this section, we describe how to build a quasi-free state from the modified MOPs.
For sake of completeness, we first discuss the case in which the mSMOP holds true.
Eventually, we focus on the case where  only the mWMOP holds true.
We stress that the main point here is to provide a construction of a state which also holds in cases where the method of \cite{Felix2} cannot be applied immediately. 

\begin{thm}\label{Teo:construction of mFP in mSMOP hp}
If the mSMOP holds true, then, for any $\MassFunction{}\in C^\infty_\Compact(\massinterval)$, there exists a unique self-adjoint operator $\ModifiedFermionicProjectionOperator{\massofinterest,\MassFunction{}}:\Hilbertspinor{\massofinterest}\to\Hilbertspinor{\massofinterest}$, henceforth called \textbf{modified Fermionic Signature operator}, defined by
\begin{gather}\label{modified Fermionic operator}
\scalarproduct{\ModifiedFermionicProjectionOperator{\massofinterest,\MassFunction{}}\psi_1}{\psi_2}=
\MassSmearedinnerproduct{\MassFunction{}\SequenceMoller\psi_1}{\MassFunction{}\SequenceMoller\psi_2}\qquad
\forall\psi_1\in\sol{\massofinterest},
\forall\psi_2\in\Hilbertspinor{\massofinterest}.
\end{gather}
The spectral decomposition of $\ModifiedFermionicProjectionOperator{\massofinterest,\MassFunction{}}$ yields a spectral projector 
\begin{equation}\label{FPs}
\projector_{\textrm{mFP}}=\chi(\ModifiedFermionicProjectionOperator{\massofinterest,\MassFunction{}}):\Hilbertspinor{\massofinterest}\to\Hilbertspinor{\massofinterest}
\end{equation}
and hence a quasi-free state $\omega_{\textrm{mFP}}:\algebraF{\genericmass} \to \complex$.
\end{thm}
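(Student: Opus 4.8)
The plan is to read the defining relation \eqref{modified Fermionic operator} as the Riesz representation of a bounded Hermitian sesquilinear form, apply the spectral theorem to the operator it produces, and feed the resulting spectral projection into Araki's Lemma~\ref{Araki's Theorem}. First I would set $B(\psi_1,\psi_2)\doteq\MassSmearedinnerproduct{\MassFunction{}\SequenceMoller\psi_1}{\MassFunction{}\SequenceMoller\psi_2}$ and check that it is a genuine sesquilinear form on $\sol{\massofinterest}\times\sol{\massofinterest}$. This is exactly where the localization matters: although $\SequenceMoller\psi_i\in\soldirectsum$ need not lie in $\Hilbertspinor{\infty}$, multiplication by $\MassFunction{}\in C^\infty_\Compact(\massinterval)$ restores compactness in the mass, so that $\MassFunction{}\SequenceMoller\psi_i\in\Hilbertspinor{\infty}$ and $\SmearingMassOperator$ may be applied, making $B$ well defined. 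The mSMOP \eqref{mSMOP} is precisely the bound $|B(\psi_1,\psi_2)|\le C(\MassFunction{})\spinorm{\psi_1}\spinorm{\psi_2}$, so $B$ is continuous; since $\sol{\massofinterest}$ is dense in $\Hilbertspinor{\massofinterest}$, it extends uniquely to a bounded sesquilinear form on all of $\Hilbertspinor{\massofinterest}\times\Hilbertspinor{\massofinterest}$.

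Next, for fixed $\psi_1$ the map $\psi_2\mapsto B(\psi_1,\psi_2)$ is a bounded linear functional, so the Riesz theorem yields a unique $\ModifiedFermionicProjectionOperator{\massofinterest,\MassFunction{}}\psi_1\in\Hilbertspinor{\massofinterest}$ with $\scalarproduct{\ModifiedFermionicProjectionOperator{\massofinterest,\MassFunction{}}\psi_1}{\psi_2}=B(\psi_1,\psi_2)$; linearity, boundedness and uniqueness of $\ModifiedFermionicProjectionOperator{\massofinterest,\MassFunction{}}$ follow in the standard way from nondegeneracy of $\scalarproduct{\cdot}{\cdot}$. Self-adjointness amounts to the Hermiticity $B(\psi_1,\psi_2)=\overline{B(\psi_2,\psi_1)}$ which, unwinding \eqref{Mass Smeared Inner Product} and \eqref{bracket}, reduces to the pointwise identity $\fiberpairing{u}{\adjunction{}v}=\overline{\fiberpairing{v}{\adjunction{}u}}$ for spinors $u,v$; this is guaranteed by the Hermiticity of $\gammatrix{0}$ built into the adjunction \eqref{adjunction map}.

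Being bounded and symmetric, $\ModifiedFermionicProjectionOperator{\massofinterest,\MassFunction{}}$ is self-adjoint, so the spectral theorem applies and Borel functional calculus with the Heaviside function $\chi$ produces $\projector_{\textrm{mFP}}=\chi(\ModifiedFermionicProjectionOperator{\massofinterest,\MassFunction{}})$, which satisfies $\projector_{\textrm{mFP}}=\projector_{\textrm{mFP}}^*=\projector_{\textrm{mFP}}^2$ and is thus an orthogonal projection on $\Hilbertspinor{\massofinterest}$. Mirroring the Remark on the FP state, I would then set $Q\doteq\projector_{\textrm{mFP}}\oplus(I_{\Hilbertcospinor{\massofinterest}}-\adjunction{}\projector_{\textrm{mFP}}\inverse{\adjunction{}})$ on $\Hilbert{\massofinterest}$ and verify \eqref{Araki's conditions on S}. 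Since conjugation by the antilinear isometry $\adjunction{}$ sends the orthogonal projection $\projector_{\textrm{mFP}}$ to an orthogonal projection, both summands of $Q$ are projections and $0\le Q=Q^*\le 1$; the relation $Q+\involution Q\involution=I_{\Hilbert{\massofinterest}}$ is the direct computation $\involution Q\involution=(I-\projector_{\textrm{mFP}})\oplus\adjunction{}\projector_{\textrm{mFP}}\inverse{\adjunction{}}$, which uses $\involution(\psi\oplus\phi)=\inverse{\adjunction{}}\phi\oplus\adjunction{}\psi$ and adds to $Q$ to give the identity on each summand. Lemma~\ref{Araki's Theorem} then delivers the quasi-free state $\ModifiedFermionicProjectorState$.

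I expect no serious analytic obstacle, since the only nontrivial estimate is the hypothesis \eqref{mSMOP} itself and everything else is assembly of standard theorems. The delicate points are therefore bookkeeping: one must carefully track the antilinearity of $\adjunction{}$ both in the Hermiticity argument and in the conjugation $\adjunction{}\projector_{\textrm{mFP}}\inverse{\adjunction{}}$ entering Araki's second condition, and one must not conflate the spinor scalar product $\scalarproduct{\cdot}{\cdot}$ (an integral over $\Cauchysurface$) with the spacetime pairing $\bracket{\cdot}{\cdot}$ concealed inside the definition \eqref{Mass Smeared Inner Product} of $\MassSmearedinnerproductOperatorNotation$.
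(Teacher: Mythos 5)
Your proposal is correct and follows essentially the same route as the paper: use the mSMOP bound to make $\psi_2\mapsto\MassSmearedinnerproduct{\MassFunction{}\SequenceMoller\psi_1}{\MassFunction{}\SequenceMoller\psi_2}$ a continuous functional, apply Riesz to obtain a bounded symmetric (hence self-adjoint) operator, take the spectral projection $\chi(\ModifiedFermionicProjectionOperator{\massofinterest,\MassFunction{}})$, and feed $Q=\projector_{\textrm{mFP}}\oplus(I_{\Hilbertcospinor{\genericmass}}-\adjunction{}\projector_{\textrm{mFP}}\inverse{\adjunction{}})$ into Lemma~\ref{Araki's Theorem}. The only differences are cosmetic: you spell out the Hermiticity of the form and the verification of Araki's conditions, which the paper leaves to direct inspection.
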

\begin{proof}
If the mSMOP holds true then, for any $\MassFunction{}\in C^\infty_{\Compact}(\massinterval)$, by (\ref{mSMOP}) we can define a linear bounded operator $\ModifiedFermionicProjectionOperator{\massofinterest,\MassFunction{}}:\Hilbertspinor{\massofinterest}\to\Hilbertspinor{\massofinterest}$ defined via Riesz Theorem.
Indeed (\ref{mSMOP}) assures that, for any $\psi_1\in\sol{\massofinterest}$, $\psi_2\mapsto\MassSmearedinnerproduct{\MassFunction{}\SequenceMoller\psi_1}{\MassFunction{}\SequenceMoller\psi_2}$ is a densely defined continuous linear functional.
After extension on $\Hilbertspinor{\massofinterest}$ we can apply Riesz Theorem to obtain
\begin{gather*}
\scalarproduct{\ModifiedFermionicProjectionOperator{\massofinterest,\MassFunction{}}\psi_1}{\psi_2}=
\MassSmearedinnerproduct{\MassFunction{}\SequenceMoller\psi_1}{\MassFunction{}\SequenceMoller\psi_2}\qquad
\forall\psi_1\in\sol{\massofinterest},
\forall\psi_2\in\Hilbertspinor{\massofinterest},
\end{gather*}
where we already made explicit the linear dependence on $\psi_1$ of the element $\ModifiedFermionicProjectionOperator{\massofinterest,\MassFunction{}}\psi_1\in\Hilbertspinor{\massofinterest}$.
Thus, we have found the modified Fermionic Signature operator, namely a linear map $\ModifiedFermionicProjectionOperator{\massofinterest,\MassFunction{}}:\sol{\massofinterest}\to\Hilbertspinor{\massofinterest}$, which is also symmetric on account of (\ref{modified Fermionic operator}).
Notice that this procedure only makes use of the bound on $\psi_2$: Hence it is valid also in the case where the mWMOP holds true (see (\ref{mWMOP})).
In the case of the mSMOP, we can also conclude that $\ModifiedFermionicProjectionOperator{\massofinterest,\MassFunction{}}$ is bounded, actually $\norm{\ModifiedFermionicProjectionOperator{\massofinterest,\MassFunction{}}}{}\leq C(\MassFunction{})$.
We have thus a self-adjoint operator $\ModifiedFermionicProjectionOperator{\massofinterest,\MassFunction{}}:\Hilbertspinor{\massofinterest}\to\Hilbertspinor{\massofinterest}$, whose spectral decomposition allows to define a projector $\projector_{\textrm{mFP}}=\chi(\ModifiedFermionicProjectionOperator{\massofinterest,\MassFunction{}})$.
From it we can construct a state, $\omega_{\textrm{mFP}}$ by applying Lemma \ref{Araki's Theorem}, once we defined 
\begin{gather*}
Q\doteq \projector_{\textrm{mFP}}\oplus(I_{\Hilbertcospinor{\genericmass}}-\adjunction{}\projector_{\textrm{mFP}}\inverse{\adjunction{}}),
\end{gather*}
where $\adjunction{}$ as been introduced in \eqref{adjunction map} .
This completes the construction of the modified Fermionic Projector state in the case of the mSMOP.
\end{proof} 

We deal now with the case in which the mWMOP holds true but not the mSMOP. 

\begin{thm}\label{Teo:construction of mFP in mWMOP hp}
If the mWMOP holds true, then there exists a densely defined, symmetric operator $\ModifiedFermionicProjectionOperator{\massofinterest,\MassFunction{}}:\Hilbertspinor{\massofinterest}\to\Hilbertspinor{\massofinterest}$. Moreover we obtain a quasi-free state $\omega_{\text{mFP}}:\algebraF{\genericmass} \to \complex$ out of the operator $\projector_{\textrm{mFP}}:\Hilbertspinor{\massofinterest}\to\Hilbertspinor{\massofinterest}$ defined by
\begin{equation}\label{FPw}
\projector_{\textrm{mFP}}=\frac{1}{2} \int_{\sigma(\ModifiedFermionicProjectionOperator{\massofinterest}^2)} \rho^{-\frac{1}{2}}( \ModifiedFermionicProjectionOperator{\massofinterest} + \rho^{\frac{1}{2}}) \,\textrm{d}\mu_\rho
\end{equation}
where $\textrm{d}\mu$ is the spectral measure associated to $\ModifiedFermionicProjectionOperator{\massofinterest}$.
\end{thm}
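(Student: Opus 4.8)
The plan is to mirror the two-part structure of the proof of Theorem~\ref{Teo:construction of mFP in mSMOP hp}: first produce the densely defined symmetric operator from the mWMOP bound, and then manufacture out of it a bounded spectral projector that can be fed into Araki's Lemma~\ref{Araki's Theorem}. For the first part I would reuse, essentially verbatim, the Riesz argument of Theorem~\ref{Teo:construction of mFP in mSMOP hp}, as anticipated there: that construction only exploits a bound on the \emph{second} entry, and this is exactly what (\ref{mWMOP}) provides. Fixing $\psi_1\in\sol{\massofinterest}$, the functional $\psi_2\mapsto\MassSmearedinnerproduct{\MassFunction{}\SequenceMoller\psi_1}{\MassFunction{}\SequenceMoller\psi_2}$ is continuous on the dense subspace $\sol{\massofinterest}\subseteq\Hilbertspinor{\massofinterest}$ with respect to $\spinorm{\cdot}$, so the Riesz representation theorem yields a unique $\ModifiedFermionicProjectionOperator{\massofinterest,\MassFunction{}}\psi_1\in\Hilbertspinor{\massofinterest}$. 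Linearity in $\psi_1$ is inherited from the sesquilinearity of $\MassSmearedinnerproductOperatorNotation$ and the linearity of $\SequenceMoller$ and of multiplication by $\MassFunction{}$, while symmetry on $\sol{\massofinterest}$ follows from the Hermiticity of the form, $\MassSmearedinnerproductOperatorNotation(\Phi,\Psi)=\overline{\MassSmearedinnerproductOperatorNotation(\Psi,\Phi)}$, which itself reduces to the Hermiticity of the fibrewise pairing $\fiberpairing{\cdot}{\adjunction{}\cdot}$ carried by $\gammatrix{0}$. Crucially, and in contrast with the mSMOP case, the constant $C(\MassFunction{},\psi_1)$ in (\ref{mWMOP}) depends on $\psi_1$, so no uniform bound is available and $\ModifiedFermionicProjectionOperator{\massofinterest,\MassFunction{}}$ is in general unbounded.

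The delicate point -- and the step I expect to be the main obstacle -- is that the spectral calculus underlying (\ref{FPw}) requires a \emph{self-adjoint} operator, whereas so far $\ModifiedFermionicProjectionOperator{\massofinterest,\MassFunction{}}$ is merely densely defined and symmetric. I would therefore pass to its closure and exhibit a self-adjoint realization, still denoted $\ModifiedFermionicProjectionOperator{\massofinterest}$. The natural device is to produce an antiunitary symmetry commuting with $\ModifiedFermionicProjectionOperator{\massofinterest,\MassFunction{}}$ -- the spinor charge conjugation $\spinorchargeconjugation{}$ is the obvious candidate -- so that the two deficiency indices coincide and a self-adjoint extension is guaranteed by von Neumann's theorem. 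One must simultaneously control the spectrum near the origin, where the weight $\rho^{-1/2}$ in (\ref{FPw}) is singular; this forces one to work on $(\ker\ModifiedFermionicProjectionOperator{\massofinterest})^{\perp}$ or to verify that $0$ carries no spectral mass, and to check that the resulting projector does not depend on the arbitrary choice of extension.

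Granting a self-adjoint $\ModifiedFermionicProjectionOperator{\massofinterest}$, the conclusion is short. Formula (\ref{FPw}) is merely the spectral rewriting of the Heaviside projector
\begin{equation*}
\projector_{\textrm{mFP}}=\chi(\ModifiedFermionicProjectionOperator{\massofinterest})=\tfrac{1}{2}\bigl(I+\ModifiedFermionicProjectionOperator{\massofinterest}\,|\ModifiedFermionicProjectionOperator{\massofinterest}|^{-1}\bigr),\qquad
|\ModifiedFermionicProjectionOperator{\massofinterest}|^{-1}=\int_{\sigma(\ModifiedFermionicProjectionOperator{\massofinterest}^{2})}\rho^{-1/2}\,\textrm{d}\mu_\rho,
\end{equation*}
whose advantage is that it only calls for the \emph{bounded} functional calculus of $\ModifiedFermionicProjectionOperator{\massofinterest}^{2}$ followed by a single multiplication by $\ModifiedFermionicProjectionOperator{\massofinterest}$, making transparent that $\projector_{\textrm{mFP}}$ is bounded even though $\ModifiedFermionicProjectionOperator{\massofinterest}$ is not. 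Since $\chi$ is the characteristic function of $[0,\infty)$, $\projector_{\textrm{mFP}}$ is an orthogonal projection, whence $0\leq\projector_{\textrm{mFP}}=\projector_{\textrm{mFP}}^{*}\leq I$ holds automatically. I would then set $Q\doteq\projector_{\textrm{mFP}}\oplus(I_{\Hilbertcospinor{\genericmass}}-\adjunction{}\projector_{\textrm{mFP}}\inverse{\adjunction{}})$ on $\Hilbert{\genericmass}$, exactly as in Theorem~\ref{Teo:construction of mFP in mSMOP hp}, verify the relation $Q+\involution Q\involution=I_{\Hilbert{\genericmass}}$ by a direct computation from the definition (\ref{involution on the label space}) of $\involution$, and invoke Lemma~\ref{Araki's Theorem} to obtain the desired quasi-free state $\omega_{\textrm{mFP}}$ on $\algebraF{\genericmass}$.
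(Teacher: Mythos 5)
Your first and last steps coincide with the paper's: the Riesz argument recycled from Theorem \ref{Teo:construction of mFP in mSMOP hp} (which, as you correctly observe, only exploits the bound in the second entry) yields the densely defined symmetric operator $\ModifiedFermionicProjectionOperator{\massofinterest,\MassFunction{}}$, and the final passage to $\omega_{\textrm{mFP}}$ via $Q\doteq \projector_{\textrm{mFP}}\oplus(I_{\Hilbertcospinor{\genericmass}}-\involution\projector_{\textrm{mFP}}\involution)$ and Lemma \ref{Araki's Theorem} is identical. The divergence is in the middle step. The paper does not try to extend $\ModifiedFermionicProjectionOperator{\massofinterest}$ itself to a self-adjoint operator: following \cite{Felix2} (Section 3.2), it applies the Friedrichs method to the \emph{nonnegative} symmetric operator $\ModifiedFermionicProjectionOperator{\massofinterest}^{2}$, whose Friedrichs extension exists canonically with no deficiency-index argument, and then assembles $\projector_{\textrm{mFP}}$ from the functional calculus of that extension together with a single application of $\ModifiedFermionicProjectionOperator{\massofinterest}$ --- this is precisely why \eqref{FPw} integrates over $\sigma(\ModifiedFermionicProjectionOperator{\massofinterest}^{2})$ rather than over a spectrum of $\ModifiedFermionicProjectionOperator{\massofinterest}$.

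Your alternative --- von Neumann's theorem applied to $\ModifiedFermionicProjectionOperator{\massofinterest,\MassFunction{}}$ via a commuting antiunitary conjugation --- is a legitimate strategy in principle, but as written it has a gap. You nominate the charge conjugation $\spinorchargeconjugation{}$ as the ``obvious candidate'' without verifying that it preserves $\sol{\massofinterest}$ and commutes with an operator built from the M\o ller maps $\Moller{\genericmass,\massofinterest}$, the cut-off $\MassFunction{}$ and the spacetime pairing $\MassSmearedinnerproductOperatorNotation$; none of this is automatic, and the paper never needs it. Moreover, even granting equal deficiency indices, von Neumann's theorem produces a whole family of self-adjoint extensions, and you flag but do not resolve the dependence of $\projector_{\textrm{mFP}}$ on that choice --- an ambiguity the Friedrichs construction eliminates, since the Friedrichs extension is distinguished. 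Your observation about the singularity of $\rho^{-1/2}$ at $\rho=0$ is fair and is a point the paper itself leaves implicit (one works on $(\ker\ModifiedFermionicProjectionOperator{\massofinterest})^{\perp}$ or fixes the kernel by convention), so it is not held against you. To align with the paper, replace your extension step by: apply the Friedrichs method to $\ModifiedFermionicProjectionOperator{\massofinterest}^{2}$.
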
 
\begin{proof}
Following the first part of the proof of Theorem \ref{Teo:construction of mFP in mSMOP hp}, the mWMOP in combination with the Riesz representation Theorem allow us to introduce, for all choices of $\MassFunction{}\in C^\infty_{\Compact}(\massinterval)$, a densely defined symmetric linear operator $\ModifiedFermionicProjectionOperator{\massofinterest,\MassFunction{}}$
\begin{gather*}
\scalarproduct{\ModifiedFermionicProjectionOperator{\massofinterest,\MassFunction{}}\psi_1}{\psi_2}=
\MassSmearedinnerproduct{\MassFunction{}\SequenceMoller\psi_1}{\MassFunction{}\SequenceMoller\psi_2}\qquad
\forall\psi_1\in\sol{\massofinterest},
\forall\psi_2\in\Hilbertspinor{\massofinterest}.
\end{gather*}
However, this time we have a priori no boundedness condition on $\ModifiedFermionicProjectionOperator{\massofinterest,\MassFunction{}}$.
Nevertheless, we can still use techniques similar to the ones used in \cite{Felix2} (cfr. Section 3.2) to obtain the projector $\projector_{\textrm{mFP}}$.
First we use the Friederich method to construct a self-adjoint extension for $\ModifiedFermionicProjectionOperator{\massofinterest}^2$ (see \cite{Lax} for more details); subsequently we define a spectral projector of $\ModifiedFermionicProjectionOperator{\massofinterest}$ as
\begin{gather*}
\projector_{\textrm{mFP}}\doteq\chi(\ModifiedFermionicProjectionOperator{\massofinterest}) \doteq \frac{1}{2 \sqrt{\ModifiedFermionicProjectionOperator{\massofinterest}^2}} \left( \ModifiedFermionicProjectionOperator{\massofinterest} + \sqrt{\ModifiedFermionicProjectionOperator{\massofinterest}^2} \right) = \frac{1}{2} \int_{\sigma(\ModifiedFermionicProjectionOperator{\massofinterest}^2)} \rho^{-\frac{1}{2}}( \ModifiedFermionicProjectionOperator{\massofinterest} + \rho^{\frac{1}{2}}) \,\dvol\mu_\rho,
\end{gather*}
where $\dvol\mu$ is the spectral measure associated to $\ModifiedFermionicProjectionOperator{\massofinterest}$. As before, we obtain the quasi-free state $\omega_{\text{mFP}}$ by introducing 
$Q\doteq \projector_{\textrm{FP}}\oplus(I_{\Hilbertcospinor{\genericmass}}-\involution\projector_{\textrm{FP}}\involution)$ and by applying Lemma \ref{Araki's Theorem}.
\end{proof}
\begin{rem}
Whenever the modified Fermionic Signature operator  $\ModifiedFermionicProjectionOperator{\massofinterest,\MassFunction{}}$ is also self-adjoint, formula \eqref{FPw} is reduced to \eqref{FPs}. 
\end{rem}

To conclude this section we analyze of the Hadamard condition for the class of states $\omega_{\textrm{mFP}}$.
A general treatment would lead to a case-by-case analysis because it is a priori not clear which class of globally hyperbolic spacetimes $\mathscr{M}$ enjoys the modified strong/weak Mass Oscillation Properties.
Nevertheless, a positive answer can be obtained whenever the FP state $\omega_{\textrm{FP}}$ satisfies the Hadamard condition. This entails that our construction leads to a new class of Hadamard states.
\begin{thm}\label{Theorem: Hadamard property in a special case}
Let assume that the SMOP holds true on the interval $I_\alpha$ and that the states $\omega_{FP,\beta}$ constructed out of $\Pi_{\textrm{FP},\beta}\doteq\chi(\FermionicProjectionOperator{\beta})$ satisfy the Hadamard condition for all $\beta\in I_\alpha$.
Then the state $\omega_{\textrm{mFP}}$ constructed out of $\Pi_{\textrm{mFP}}\doteq\chi(\ModifiedFermionicProjectionOperator{\alpha,\mathfrak{m}})$ satisfies the Hadamard condition.
\end{thm}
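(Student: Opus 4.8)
The plan is to make the structure of $\ModifiedFermionicProjectionOperator{\massofinterest,\MassFunction{}}$ explicit under the SMOP hypothesis and then run a microlocal argument reducing the Hadamard form of $\state_{\textrm{mFP}}$ to that of the family $\state_{\textrm{FP},\genericmass}$. First I would combine the defining relation \eqref{modified Fermionic operator} with the identity \eqref{Defining property of the Fermionic Operator}, the diagonalisation \eqref{diagonalisation due to SMOP} (valid since the SMOP holds), and the definition \eqref{Sequence Moller operator} of $\SequenceMoller$, to obtain for $\psi_1,\psi_2\in\sol{\massofinterest}$
\begin{align*}
\scalarproduct{\ModifiedFermionicProjectionOperator{\massofinterest,\MassFunction{}}\psi_1}{\psi_2}
&=\MassSmearedinnerproduct{\MassFunction{}\SequenceMoller\psi_1}{\MassFunction{}\SequenceMoller\psi_2}
=\directsumscalarproduct{\FermionicOperator(\MassFunction{}\SequenceMoller\psi_1)}{\MassFunction{}\SequenceMoller\psi_2}\\
&=\int_{\massinterval}|\MassFunction{}(\genericmass)|^2\,
\spinorscalarproduct{\FermionicProjectionOperator{\genericmass}\Moller{\genericmass,\massofinterest}\psi_1}{\Moller{\genericmass,\massofinterest}\psi_2}\,\textrm{d}\genericmass ,
\end{align*}
whence, using that $\Moller{\genericmass,\massofinterest}$ is unitary (Section~\ref{Modified Fermionic Projector state}),
\begin{gather*}
\ModifiedFermionicProjectionOperator{\massofinterest,\MassFunction{}}
=\int_{\massinterval}|\MassFunction{}(\genericmass)|^2\,
\Moller{\genericmass,\massofinterest}^{*}\,\FermionicProjectionOperator{\genericmass}\,\Moller{\genericmass,\massofinterest}\,\textrm{d}\genericmass .
\end{gather*}
This displays the modified Fermionic Signature operator as a $|\MassFunction{}|^2$-weighted fibre average of the signature operators $\FermionicProjectionOperator{\genericmass}$ pulled back to $\Hilbertspinor{\massofinterest}$ along the M\o ller maps.

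Next I would recast the Hadamard hypothesis in operator language. For the Dirac field, $\state_{\textrm{FP},\genericmass}$ being Hadamard is equivalent to the polarisation $\projector_{\textrm{FP},\genericmass}=\chi(\FermionicProjectionOperator{\genericmass})$ differing, by a smoothing operator, from the projection $P^{+}_\genericmass$ onto the positive-frequency subspace of $\Hilbertspinor{\genericmass}$ (equivalently, from the polarisation of any reference Hadamard state at mass $\genericmass$), since two Hadamard two-point functions differ by a smooth bisolution. I would fix such a $P^{+}_\genericmass$ for each $\genericmass\in\massinterval$ and record that $\projector_{\textrm{FP},\genericmass}-P^{+}_\genericmass$ is smoothing. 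It suffices to treat the spinor sector, as the cospinor two-point function is fixed by it through the pointwise adjunction $\adjunction{}$, which is microlocally trivial.

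The geometric input is that the M\o ller maps preserve the Hadamard form, as established in \cite{nico1,nico}: the operators $\Moller{\genericmass,\massofinterest}$ intertwine the dynamics of $\Dirac{\massofinterest}$ and $\Dirac{\genericmass}$, which share the same principal symbol — the mass enters only to lower order — hence the same characteristic set and the same future/past splitting on it, and they are built from the advanced/retarded Dirac propagators with standard wavefront sets. Thus $\Moller{\genericmass,\massofinterest}^{*}P^{+}_\genericmass\Moller{\genericmass,\massofinterest}=P^{+}_\massofinterest+(\text{smoothing})$, uniformly for $\genericmass\in\supp{\MassFunction{}}$. Using $\chi(\Moller{}^{*}\FermionicProjectionOperator{}\Moller{})=\Moller{}^{*}\chi(\FermionicProjectionOperator{})\Moller{}$ for unitary $\Moller{}$, each transported operator $A_\genericmass\doteq\Moller{\genericmass,\massofinterest}^{*}\FermionicProjectionOperator{\genericmass}\Moller{\genericmass,\massofinterest}$ then has positive spectral projection equal to the \emph{single}, $\genericmass$-independent projection $P^{+}_\massofinterest$ modulo smoothing; in particular $[A_\genericmass,P^{+}_\massofinterest]$ is smoothing, and the negative part of $A_\genericmass$ contributes to $P^{+}_\massofinterest A_\genericmass P^{+}_\massofinterest$ only through a smoothing term, so $P^{+}_\massofinterest A_\genericmass P^{+}_\massofinterest\geq 0$ and $(I-P^{+}_\massofinterest)A_\genericmass(I-P^{+}_\massofinterest)\leq 0$ modulo smoothing.

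Finally I would integrate over $\genericmass$. Because the weight $|\MassFunction{}(\genericmass)|^2$ is nonnegative and the remainders are smoothing uniformly on $\supp{\MassFunction{}}$, the fibre integral inherits these properties: $\ModifiedFermionicProjectionOperator{\massofinterest,\MassFunction{}}$ commutes with $P^{+}_\massofinterest$ modulo smoothing and is positive (resp.\ negative) on $\mathrm{ran}\,P^{+}_\massofinterest$ (resp.\ its orthogonal complement) modulo smoothing, the positivity of the weight being what prevents cancellations. Hence $\projector_{\textrm{mFP}}=\chi(\ModifiedFermionicProjectionOperator{\massofinterest,\MassFunction{}})=P^{+}_\massofinterest+(\text{smoothing})$, so the polarisation of $\state_{\textrm{mFP}}$ agrees with a Hadamard one up to smoothing and $\widetilde{\state_{\textrm{mFP},2}}$ has the Hadamard wavefront set. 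The main obstacle is precisely this last passage — deducing the wavefront structure of $\chi(\ModifiedFermionicProjectionOperator{\massofinterest,\MassFunction{}})$ from the pointwise-in-$\genericmass$ smoothing statements: one must check that the $\genericmass$-integral of the remainders remains smoothing (requiring uniformity/measurability in $\genericmass$ on the compact set $\supp{\MassFunction{}}$), that the fibre integration introduces no singularities transverse to the Hadamard directions, and that the zero-spectrum ambiguity of $\chi$ is microlocally harmless because the relevant principal symbol is the nondegenerate sign of energy. The nonnegativity of $|\MassFunction{}|^2$ is essential throughout.
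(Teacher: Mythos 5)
Your proposal follows essentially the same route as the paper's proof: both derive the explicit representation $\ModifiedFermionicProjectionOperator{\massofinterest,\MassFunction{}}=\int_{\massinterval}|\MassFunction{\genericmass}|^2\,\Moller{\genericmass,\massofinterest}^{\dagger}\FermionicProjectionOperator{\genericmass}\Moller{\genericmass,\massofinterest}\,\textrm{d}\genericmass$ from \eqref{modified Fermionic operator} combined with \eqref{diagonalisation due to SMOP}, then invoke the preservation of the Hadamard form under the M\o ller maps, and finally argue that the integration over the mass parameter does not alter the singular structure. If anything, you are more careful than the paper at the one delicate step: the paper asserts the exact identity $\chi(\ModifiedFermionicProjectionOperator{\massofinterest,\MassFunction{}})=\int_{\massinterval}|\MassFunction{\genericmass}|^2\,\Moller{\genericmass,\massofinterest}^{\dagger}\Pi_{\textrm{FP},\beta}\Moller{\genericmass,\massofinterest}\,\textrm{d}\genericmass$ ``since $R_{\beta,\alpha}$ is unitary'', which cannot hold literally (the right-hand side is not an idempotent), whereas you correctly formulate the conclusion modulo smoothing operators and use the nonnegativity of the weight $|\MassFunction{}|^2$ to control the spectral function of the integrated operator.
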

\begin{proof}
First, we make connection with the construction of \cite{Felix2}, in particular we discuss the relation between the operators $\FermionicProjectionOperator{}$ and $\ModifiedFermionicProjectionOperator{\massofinterest,\MassFunction{}}$.
Let assume that the SMOP holds true.
For any $\psi_1,\psi_2\in\sol{\massofinterest}$ it holds
\begin{align*}
\scalarproduct{\ModifiedFermionicProjectionOperator{\massofinterest,\MassFunction{}}\psi_1}{\psi_2}=&
\MassSmearedinnerproduct{\MassFunction{}\SequenceMoller\psi_1}{\MassFunction{}\SequenceMoller\psi_2}\\ =& \directsumscalarproduct{\FermionicOperator\MassFunction{}\SequenceMoller\psi_1}{\MassFunction{}\SequenceMoller\psi_2}=
\scalarproduct{(\MassFunction{}\SequenceMoller)^\dagger\FermionicOperator\MassFunction{}\SequenceMoller\psi_1}{\psi_2}\,.
\end{align*}
Thus we find $\ModifiedFermionicProjectionOperator{\massofinterest,\MassFunction{}}=(\MassFunction{}\SequenceMoller)^\dagger\FermionicOperator\MassFunction{}\SequenceMoller$ or explicitly
\begin{gather*}
\ModifiedFermionicProjectionOperator{\massofinterest,\MassFunction{}}\psi_1=
\int_{\massinterval}\textrm{d}\genericmass
|\MassFunction{\genericmass}|^2\Moller{\genericmass,\massofinterest}^\dagger\FermionicProjectionOperator{\genericmass}\Moller{\genericmass,\massofinterest}\psi_1\,.
\end{gather*}
Since $R_{\beta,\alpha}$ is unitary one finds that
\begin{gather*}
\Pi_{\textrm{mFP}}\doteq
\chi(\ModifiedFermionicProjectionOperator{\massofinterest,\MassFunction{}})=
\int_{\massinterval}\textrm{d}\genericmass
|\MassFunction{\genericmass}|^2\Moller{\genericmass,\massofinterest}^\dagger\Pi_{\textrm{FP},\beta}\Moller{\genericmass,\massofinterest}\,,\quad
\Pi_{\textrm{FP},\beta}\doteq\chi(\FermionicProjectionOperator{\genericmass})\,.
\end{gather*}
Since by hypothesis $\Pi_{\textrm{FP},\beta}$ leads to a Hadamard state, one can use standard arguments on the propagation of singularities (see \cite[Theorem 4.1]{nico} or \cite[Theorem 5.2.1]{simoPhD} to infer that $\Moller{\genericmass,\massofinterest}^\dagger\Pi_{\textrm{FP},\beta}\Moller{\genericmass,\massofinterest}$ satisfies the Hadamard property.
Since the integration over $\beta$ does not affect the singular behavior of the state, we obtain that $\omega_{\textrm{mFP}}$ satisfies the Hadamard condition.
\end{proof}

\begin{rem}
\textit{(i)}
The hypothesis of Theorem \ref{Theorem: Hadamard property in a special case} are satisfied in Minkowski spacetime in the presence of a time-dependent external field \cite{simo} and of a plane electromagnetic wave \cite{Felix5}. 
\\
\textit{(ii)}
Variations on the construction of $\omega_{\textrm{mFP}}$ can be obtained by varying the spectral function $\chi$ in the definition of $\Pi_{mFP}$.
Indeed, any real-valued bounded function $f$ on the spectrum of $\ModifiedFermionicProjectionOperator{\alpha,\mathfrak{m}}$ would lead to a state $\omega_{\textrm{mFP},f}$ constructed out of $f(\ModifiedFermionicProjectionOperator{\alpha,\mathfrak{m}})$.
The arbitrariness of $f$ can be used in concrete examples to select states $\omega_{\textrm{mFP},f}$ satisfying the Hadamard condition, in the spirit of \cite{ChrisBenni}.
\\
\textit{(iii)}
By construction, our procedure depends on the chosen mass cut-off $\MassFunction{}\in C^\infty_{\Compact}(\massinterval)$.
Definition \ref{definizione di mWMOP e mSMOP} allows to perform the construction of $\omega_{\textrm{mFP}}$ for any choice of $\MassFunction{}$, but it does not fix any continuous dependence of this latter parameter.
\footnote{One could modify the definition of both mWMOP and mSMOP to avoid the mass cut off $\MassFunction{}$, but this would create additional difficulties in the comparison between the mMOPs and the original MOPs.}
Thus we do not expect, in general, to be able to perform a limit $\MassFunction{}\to 1$ which would remove the dependence on $\MassFunction{}$.\\
Finally, notice that for the whole construction of the Fermionic Signature operators $\FermionicProjectionOperator{\massofinterest}$ and $\ModifiedFermionicProjectionOperator{\massofinterest}$, we restricted ourself to the case of constant mass $\massofinterest$.
In the case of nonconstant $\massofinterest\in C^\infty(\spacetime,\real)$ we can still successfully apply the intertwining operator $\Moller{\genericmass,\massofinterest}$, but it is not immediately clear what should be the analogous of the space $\HilbertDirectSum$.
One may try to consider $\beta\in C^\infty(\spacetime,\real)\cap L^2(\spacetime,g)$, thus inducing a Gaussian measure on that space: The space $\HilbertDirectSum$ may than be defined as in Section \ref{Fermionic Projector state} with $\textrm{d}\genericmass$ substituted by the Gaussian measure.
\end{rem}

\section{Rindler spacetime and the mass oscillation properties}\label{Rindler counterexample} 

We describe now an explicit example to construct a modified Fermionic Projector state which could not have been dealt in \cite{Felix2}.
More precisely, we consider a spacetime where neither the SMOP nor the WMOP holds true, while the mWMOP does. 

Let $\Minkowski$ be the four dimensional Minkowski spacetime and let $\Rindler$ be the Rindler spacetime, defined as
\begin{equation*}
\Rindler \doteq\{(t,x,y,z) \in \Minkowski \ | \ |t| \leq x \} \, .
\end{equation*}
$\Rindler$ is globally hyperbolic spacetime and a foliation by smooth Cauchy hypersurfaces can be given, fixing hyperbolic coordinates $(t,x,y,z)=(r\sinh(s),r\cosh(s),y,z)$, as $\Cauchysurface_{s_0}=\{s=s_0\}$.
We prove now that, while the SMOP holds true on $\Minkowski$ (see \cite{Felix2,simo}), it does not in $\Rindler$.
The reason is the failure of (\ref{Commutation with MassMultiplicativeOperator}) on the latter space. Thus none of the mass oscillation properties introduced in the Definitions \ref{Definition of WMOP} and \ref{Definition of SMOP} can be satisfied.
Nevertheless, we show that the inequality (\ref{WMOP}) still holds true, thus implying the validity of the mWMOP. \\
First, we focus on \eqref{WMOP}: The proof follows the same path described in \cite{Felix2,simo} in the case of Minkowski spacetime.
This is related to the observation that a spacelike compact solution $\psi_\Rindler\in\sol{\massofinterest}_{\Rindler}$ of the Dirac equation in Rindler spacetime can be obtained by restriction of a spacelike compact solution $\psi_{\Minkowski}\in\sol{\massofinterest}_{\Minkowski}$ of the same equation on the whole Minkowski spacetime.
Moreover the norm of $\psi_{\Rindler}$ in the space $\overline{\sol{\massofinterest}}_{\Rindler}$ coincides with the norm of $\psi_{\Minkowski}$ on $\overline{\sol{\massofinterest}}_{\Minkowski}$, i.e. $\norm{\psi_{\Rindler}}{\Rindler}=\norm{\psi_{\Minkowski}}{\Minkowski}$. We can write $\psi_\Rindler=\bold{1}_\Rindler\psi_{\Minkowski}$, being $\bold{1}_\Rindler$ the characteristic function of $\Rindler$. \\

\begin{prop}
In Rindler spacetime, the weak and the strong mass oscillation properties do not hold, while the modified weak mass oscillation property \eqref{mWMOP} does.
\end{prop}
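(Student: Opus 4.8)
The plan is to split the statement into the failure of the two original mass oscillation properties and the validity of the modified weak one, reducing both to the behaviour of the smearing operator $\SmearingMassOperator$ under integration by parts on the Rindler wedge. For the failure I would target the commutation identity \eqref{Commutation with MassMultiplicativeOperator}, which is item~2) of Definition \ref{Definition of WMOP} and, in the equivalent form~(i), part of the SMOP. The key observation is that on a family of solutions $\Psi\in\soldirectsum$ the Dirac equation gives $i\spinoroperator\Psi_\genericmass=\genericmass\Psi_\genericmass$, so that $\SmearingMassOperator\MassMultiplicativeOperator\Psi=i\spinoroperator\SmearingMassOperator\Psi$ and \eqref{Commutation with MassMultiplicativeOperator} becomes the formal symmetry of $i\spinoroperator$ with respect to $\innerproduct{\cdot}{\cdot}=\bracket{\cdot}{\adjunction{}\cdot}$, evaluated on the smeared sections $\SmearingMassOperator\Psi_1,\SmearingMassOperator\Psi_2$. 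Using the formal adjoint relation for $\Dirac{\genericmass}$ together with the intertwining $\cospinoroperator\adjunction{}=\adjunction{}\spinoroperator$ of the adjunction map with the covariant derivatives (which is exactly what makes $i\spinoroperator$ formally symmetric for $\innerproduct{\cdot}{\cdot}$, once the antilinearity of $\adjunction{}$ is taken into account), the difference of the two sides of \eqref{Commutation with MassMultiplicativeOperator} reduces to a boundary integral over the Rindler horizon $\{|t|=x\}$: spacelike compactness kills the contribution at spatial infinity and the mass-smearing the one at timelike infinity, leaving only the horizon.

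I would then make this obstruction explicit. Writing $\Psi_i=\mathbf{1}_{\Rindler}\Psi_{\Minkowski}$ for mass-compact Minkowski families whose smeared sections have nonvanishing trace on $\{|t|=x\}$, the horizon integral does not vanish, so \eqref{Commutation with MassMultiplicativeOperator} fails. By Definition \ref{Definition of WMOP} this already excludes the WMOP, and since the SMOP implies the WMOP it excludes the SMOP as well. I expect this step to be the main obstacle: it is not a soft argument but requires the explicit Dirac solutions in the hyperbolic coordinates $(t,x,y,z)=(r\sinh s,r\cosh s,y,z)$ and a genuine verification that the horizon pairing does not integrate to zero, in contrast to the cancellation that occurs on $\Minkowski$.

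For the mWMOP I would first show that the inequality \eqref{WMOP} alone --- item~1) of Definition \ref{Definition of WMOP}, \emph{without} the commutation identity --- still holds on $\Rindler$, following the Minkowski argument of \cite{Felix2,simo}. There the point is that the mass oscillation places the smeared section $\SmearingMassOperator\Psi_1$ in $L^2(\Minkowski)$ and controls the spacetime pairing $\MassSmearedinnerproduct{\Psi_1}{\Psi_2}$ by the mass-integrated Cauchy norm of $\Psi_2$, i.e.\ by $C(\Psi_1)\SequenceSpinorm{\Psi_2}$. On the wedge I would use that each $\Psi_{i,\genericmass}$ is the restriction of a Minkowski solution, that $\SmearingMassOperator$ commutes with the restriction $\mathbf{1}_{\Rindler}$, and that restriction does not increase the relevant $L^2$ and Cauchy-surface norms, indeed $\norm{\psi_{\Rindler}}{\Rindler}=\norm{\psi_{\Minkowski}}{\Minkowski}$; hence the Minkowski estimate localizes to $\Rindler$ and yields \eqref{WMOP} with a constant inherited from the ambient one.

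Finally I would deduce \eqref{mWMOP} from \eqref{WMOP} by the substitution already used in the proof of the Corollary. For $\psi_1,\psi_2\in\sol{\massofinterest}$ and $\MassFunction{}\in C^\infty_{\Compact}(\massinterval)$ one first checks that $\MassFunction{}\SequenceMoller\psi_i\in\Hilbertspinor{\infty}$: smoothness and mass-compactness come from $\MassFunction{}$ and from the smooth dependence of $\Moller{\genericmass,\massofinterest}$ on $\genericmass$, while spacelike compactness is uniform over the compact set $\supp{\MassFunction{}}$ thanks to the causal support of the propagators entering \eqref{Sequence Moller operator}. Applying \eqref{WMOP} with $\Psi_1=\MassFunction{}\SequenceMoller\psi_1$, $\Psi_2=\MassFunction{}\SequenceMoller\psi_2$ and evaluating the norm through the isometry of $\Moller{\genericmass,\massofinterest}$, namely $\SequenceSpinorm{\MassFunction{}\SequenceMoller\psi_2}=\norm{\MassFunction{}}{L^2(\massinterval)}\spinorm{\psi_2}$, gives
\[
|\MassSmearedinnerproduct{\MassFunction{}\SequenceMoller\psi_1}{\MassFunction{}\SequenceMoller\psi_2}|
\leq C(\MassFunction{}\SequenceMoller\psi_1)\,\norm{\MassFunction{}}{L^2(\massinterval)}\,\spinorm{\psi_2},
\]
which is precisely \eqref{mWMOP} with $C(\MassFunction{},\psi_1)\doteq C(\MassFunction{}\SequenceMoller\psi_1)\,\norm{\MassFunction{}}{L^2(\massinterval)}$. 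Consistently with Remark \ref{proposition on relations between MOPs and mMOps}, the commutation identity is never invoked in this last chain, which is exactly why the mWMOP survives where the WMOP and SMOP do not.
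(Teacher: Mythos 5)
Your decomposition is exactly the paper's: (a) establish the inequality \eqref{WMOP} on $\Rindler$ by viewing each family as the restriction $\bold{1}_\Rindler\Psi_\Minkowski$ of a Minkowski family, foliating $\Minkowski$ by the flat surfaces $\Cauchysurface_t$ and using the Schwarz/H\"older estimates together with the $1/(1+t^2)$ decay of $\norm{\SmearingMassOperator\Psi_\Minkowski|_{\Cauchysurface_t}}{t}$ from \cite{simo}; (b) deduce \eqref{mWMOP} by substituting $\Psi_i=\MassFunction{}\SequenceMoller\psi_i$ and using $\SequenceSpinorm{\MassFunction{}\SequenceMoller\psi}=\|\MassFunction{}\|_{L^2(\massinterval)}\spinorm{\psi}$, never invoking \eqref{Commutation with MassMultiplicativeOperator}; (c) kill the WMOP and SMOP by showing that \eqref{Commutation with MassMultiplicativeOperator} fails, because integration by parts of $\MassMultiplicativeOperator\Psi_\genericmass=i\spinoroperator\Psi_\genericmass$ leaves a boundary term on the horizon $\partial\Rindler=\{|t|=x\}$. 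Parts (a) and (b) are essentially complete and match the paper's proof.

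The genuine gap is in (c): producing the horizon boundary term is the soft half of the argument, and you correctly flag that the real content is verifying that this term is nonzero for some admissible $\Psi_1,\Psi_2\in\Hilbertspinor{\infty,\Rindler}$ --- but you then stop there. Without that verification the negative half of the Proposition is unproven: a priori the horizon pairing could always cancel, exactly as the analogous boundary terms at timelike infinity do on $\Minkowski$. The paper closes this by an explicit computation, and not in hyperbolic coordinates as you anticipate: it takes positive-frequency plane-wave superpositions $\psi_\genericmass=\int c_+(k,k_\perp,\genericmass)e^{-i\omega t+ikx+ik_\perp x_\perp}$, parametrizes the two horizon branches as $(\pm s,s,x_\perp)$, extends the $s$-integral to all of $\real$ (the integrand vanishes for $s<0$), and obtains delta functions $\delta(k_--p_-)$ and $\delta(k_+-p_+)$ in the light-cone variables $k_\pm=\omega\pm k$. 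After a change of variables the double mass integral reduces to an expression involving $\overline{c_+}\bigl(\tfrac{k^2-\genericmass^2(k_\perp)}{2k},\genericmass\bigr)\gamma^1\widetilde{c}_+\bigl(\tfrac{k^2-\widetilde{\genericmass}^2(k_\perp)}{2k},\widetilde{\genericmass}\bigr)$, and one chooses the profiles so that the $\genericmass$-integral is a total derivative evaluating to a positive function of $k$ at $\genericmass=\sup\massinterval$ and to zero at $\genericmass=\inf\massinterval$; the result is manifestly nonvanishing. Some such explicit choice is indispensable, so you should either reproduce a computation of this kind or supply an alternative concrete witness before the claim that WMOP and SMOP fail can be considered established.
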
 
\begin{proof}
We first prove that \eqref{WMOP} holds true on $\Rindler$.
Let $\Psi_{\Rindler},\Phi_{\Rindler}\in\Hilbertspinor{\infty,\Rindler}$ be two families of solutions of the Dirac equation on $\Rindler$ with spacelike compact support which are also compactly supported in the mass parameter, as described in Section \ref{Fermionic Projector state} (the index $_{\Rindler}$ stresses that the latter space is the one obtained regarding $\Rindler$ as the ambient space).
Let $\MassSmearedinnerproductOperatorNotation_\Rindler:\domain{\MassSmearedinnerproductOperatorNotation_\Rindler}\to\complex,$ be the sesquilinear form introduced relatively to Rindler space:
\begin{gather*}
\MassSmearedinnerproductOperatorNotation_\Rindler(\Psi_{\Rindler},\Phi_{\Rindler})=
\langle\SmearingMassOperator\Psi_{\Rindler}|\SmearingMassOperator\Phi_{\Rindler}\rangle_\Rindler=
\int_{\Rindler}\fiberpairing{\SmearingMassOperator\Psi_{\Rindler}}{A\SmearingMassOperator\Phi_{\Rindler}}\dvol\mu_h,
\end{gather*}
being $\dvol\mu_h$ the induce volume measure on $\Rindler$ and being
\begin{align*}
\domain{\MassSmearedinnerproductOperatorNotation_\Rindler}\doteq & \{(\Psi_\Rindler,\Phi_\Rindler) \in \Hilbertspinor{\infty,\Rindler} \times \Hilbertspinor{\infty,\Rindler}|\ \MassSmearedinnerproduct{\Psi_1}{\Psi_2}\textrm{exists finite}\}\\
&\subseteq\Hilbertspinor{\infty,\Rindler} \times \Hilbertspinor{\infty,\Rindler}.
\end{align*}
Exploiting the relation between $\Psi_{\Rindler}$ and $\Psi_{\Minkowski}$, we find
\begin{eqnarray*}
|\MassSmearedinnerproductOperatorNotation_\Rindler(\Psi_{\Rindler},\Phi_{\Rindler})|&=&
\left|\int_{\Minkowski}\fiberpairing{\SmearingMassOperator\Psi_{\Minkowski}}{A\bold{1}_{\Rindler}\SmearingMassOperator\Phi_{\Minkowski}}\dvol\mu_g\right|\\&=&\left|
\int_\real \int_{\Cauchysurface_t}\fiberpairing{\SmearingMassOperator\Psi_{\Minkowski}}{A(\gamma^0)^2\bold{1}_\Rindler\SmearingMassOperator\Phi_{\Minkowski}}|_{\Cauchysurface_t}\volumeformSigma\dvol t\right|\\ &=&
\left|\int_\real\scalarproduct{\SmearingMassOperator\Psi_{\Minkowski}|_{\Cauchysurface_t}}{\gamma^0\bold{1}_\Rindler\SmearingMassOperator\Phi_{\Minkowski}|_{\Cauchysurface_t}}_t\dvol t
\right|,
\end{eqnarray*}
where in the last equality we have fixed the foliation of $\Minkowski$ in terms of the Cauchy hypersurfaces $\Cauchysurface_t=\{t=\textrm{constant}\}$ (note that such hypersurfaces are not Cauchy hypersurfaces for $\Rindler$).
The scalar product $\scalarproduct{\cdot}{\cdot}_t$ is the (time dependent) scalar product on $L^2(\Cauchysurface_t)$, formally equal to (\ref{spinor scalar product}).
However, note that $\scalarproduct{\SmearingMassOperator\Psi_{\Minkowski}|_{\Cauchysurface_t}}{\gamma^0\bold{1}_\Rindler\SmearingMassOperator\Phi_{\Minkowski}|_{\Cauchysurface_t}}_t$ is not time independent since none of the functions involved is a solution of the Dirac equation.
Nevertheless, for any $t\in\real$, both $\SmearingMassOperator\Psi_{\Minkowski}|_{\Cauchysurface_t}$ and $\bold{1}_{\Rindler}\SmearingMassOperator\Phi_{\Minkowski}|_{\Cauchysurface_t}$ lie in $L^2(\Cauchysurface_t)$.
Thus, by applying the Schwarz and the H\"older inequalities
\begin{gather*}
|\MassSmearedinnerproductOperatorNotation_\Rindler(\Psi_{\Rindler},\Phi_{\Rindler})|\leq
\int_\real\norm{\SmearingMassOperator\Psi_{\Minkowski}|_{\Cauchysurface_t}}{t}\norm{\SmearingMassOperator\Phi_{\Minkowski}|_{\Cauchysurface_t}}{t}\dvol t.
\end{gather*}
As discussed in Lemma 3.1 in \cite{simo}, we can control the latter integrands with
\begin{gather*}
\norm{\SmearingMassOperator\Phi_{\Minkowski}|_{\Cauchysurface_t}}{t}\leq
\sqrt{|\massinterval|}\norm{\Phi_{\Minkowski}}{\massinterval}=
\sqrt{|\massinterval|}\norm{\Phi_{\Rindler}}{\massinterval},\qquad
\norm{\SmearingMassOperator\Psi_{\Minkowski}|_{\Cauchysurface_t}}{t}\leq\frac{C(\Psi_{\Minkowski})}{1+t^2},
\end{gather*}
where the constant $C(\Psi_{\Minkowski})$ depends on the spatial Sobolev norm of $\Psi_{\Minkowski}|_{\Cauchysurface_t}$.
We can conclude that
\begin{gather}\label{mwm}
|\MassSmearedinnerproductOperatorNotation_\Rindler(\Psi_{\Rindler},\Phi_{\Rindler})|\leq
c(\Psi_{\Minkowski})\norm{\Phi_{\Rindler}}{\massinterval}\,.
\end{gather}
Thus \eqref{WMOP} holds true and on account of Remark \ref{proposition on relations between MOPs and mMOps} the mWMOP follows immediately. \\
We prove now that condition \eqref{Commutation with MassMultiplicativeOperator} fails: Using $\MassMultiplicativeOperator\Psi_{\Rindler}(\genericmass)=\Dirac{\genericmass}\Psi_{\Rindler}(\genericmass)$, partial integration gives a boundary term
\begin{gather*}
\MassSmearedinnerproductOperatorNotation_{\Rindler}(\MassMultiplicativeOperator\Psi_{\Rindler},\Phi_{\Rindler})-
\MassSmearedinnerproductOperatorNotation_{\Rindler}(\Psi_{\Rindler},\MassMultiplicativeOperator\Phi_{\Rindler})=
\int_{\partial\Rindler}\fiberpairing{\SmearingMassOperator\Psi_{\Minkowski}}{\SmearingMassOperator\Phi_{\Minkowski}}|_{\partial\Rindler},
\end{gather*}
where $\partial\Rindler \doteq\{(t,x,y,z) \in \Minkowski \ | \ |t| = x \}$.
To show that this latter contribution does not vanish in general, we observe that
\begin{gather}\nonumber
\int_{\partial\Rindler}\fiberpairing{\SmearingMassOperator\Psi_{\Minkowski}}{\SmearingMassOperator\Phi_{\Minkowski}}|_{\partial\Rindler}=
i\int_{\massinterval}\dvol{\genericmass}
\int_{\massinterval}\dvol{\widetilde{\genericmass}}
\int_{\real^2}\dvol{x_\perp}
\int_0^{+\infty}\!\!\!\!\!\!\dvol{s}\left[
(\overline{\psi_\genericmass}(\gamma^1-\gamma^0)\phi_{\widetilde{\genericmass}})(s,s,x_\perp)\right.\\
\label{boundary terms in coordinates}
\left.+
(\overline{\psi_\genericmass}(\gamma^1+\gamma^0)\phi_{\widetilde{\genericmass}})(-s,s,x_\perp)
\right],
\end{gather}
where $\psi_\genericmass(t,x,x_\perp)=\Psi_\Minkowski(t,x,x_\perp,\genericmass)$ and $\phi_{\widetilde{\genericmass}}(t,x,x_\perp)=\Phi_\Minkowski(t,x,x_\perp,\widetilde{\genericmass})$.
Note that, since $\psi_\genericmass$, $\phi_{\widetilde{\genericmass}}$ are solutions on $\Rindler$, the integrand vanishes if $s\not\in (0,+\infty)$. Therefore we can extend the integral over the whole real line.
Without loss of generality, we choose $\psi_\genericmass$ and $\phi_{\widetilde{\genericmass}}$ as positive frequency solutions, namely 
\begin{equation}
\label{Fourierexp}
\begin{split}
\psi_\genericmass(t,x,x_\perp)&=
\int\dvol{k_\perp}\int\dvol{k}
c_+(k,k_\perp,\genericmass)e^{-i\omega t}e^{ikx}e^{ik_\perp x_\perp},\\
\phi_{\widetilde{\genericmass}}(t,x,x_\perp)&=
\int\dvol{k_\perp}\int\dvol{k}
\widetilde{c}_+(k,k_\perp,\widetilde{\genericmass})e^{-i\widetilde{\omega} t}e^{ikx}e^{ik_\perp x_\perp}
\end{split}
\end{equation}
with $\omega^2=k^2+|k_\perp|^2+\genericmass^2$ (resp. $\widetilde{\omega}^2=k^2+|k_\perp|^2+\widetilde{\genericmass}^2$) and $c_+$ (resp. $\widetilde{c}_+$) a suitable smooth function in $k$ which is compactly supported in the mass variable $\genericmass$ (resp. $\widetilde{\genericmass}$).
For simplicity, we also assume that $c_+$ and $\widetilde{c}_+$ are symmetric in the $k$ variable.
Later on, we will argue that the dependence on the mass parameter can be chosen in such a way that the integral in \eqref{boundary terms in coordinates} does not vanish.
Using the Fourier representation \eqref{Fourierexp} in \eqref{boundary terms in coordinates}, we find
\begin{gather}
\nonumber
\int_{\real^2}\dvol{x_\perp}\int_\real\dvol{s}\left[
(\overline{\psi_\genericmass}(\gamma^1-\gamma^0)\phi_{\widetilde{\genericmass}})(s,s,x_\perp)+
(\overline{\psi_\genericmass}(\gamma^1+\gamma^0)\phi_{\widetilde{\genericmass}})(-s,s,x_\perp)\right]\\
\nonumber
=
\int_{\real^2}\dvol{k_\perp}\int_\real\dvol{k}\int_\real\dvol{p}\left[
\delta(k_--p_-)\overline{c_+}(k,k_\perp,\genericmass)(\gamma^1-\gamma^0)\widetilde{c}_+(p,k_\perp,\widetilde{\genericmass})\right.\\
\label{Boundary terms in Fourier space}
\left.+
\delta(k_+-p_+)\overline{c_+}(k,k_\perp,\genericmass)(\gamma^1+\gamma^0)\widetilde{c}_+(p,k_\perp,\widetilde{\genericmass})
\right],
\end{gather}
where $k_\pm =\omega\pm k$, $p_\pm=\widetilde{\omega}\pm p$.
Changing variables and exploiting the symmetry of $c_+$, $\widetilde{c_+}$ in $k$ we are lead to
\begin{gather*}
\eqref{Boundary terms in Fourier space}=2
\int_0^{+\infty}\!\!\!\!\!\dvol{k}\ r(k,k_\perp,\genericmass)r(k,k_\perp,\widetilde{\genericmass})\overline{c_+}\left(\frac{k^2-\genericmass^2(k_\perp)}{2k},\genericmass\right)\gamma^1\widetilde{c}_+\left(\frac{k^2-\widetilde{\genericmass}^2(k_\perp)}{2k},\widetilde{\genericmass}\right),
\end{gather*}
where $\genericmass^2(k_\perp)=\genericmass^2+|k_\perp|^2$ and $r(k,k_\perp,\genericmass)=\frac{k^2+\genericmass^2(k_\perp)}{2k^2}$.
It is then enough to choose $c_+,\widetilde{c}_+$ in such a way that $r(k,k_\perp,\genericmass)
\overline{c_+}\left(\frac{k^2-\beta^2(k_\perp)}{2k},\beta\right)$ is the total derivative in the mass parameter $\beta$ of a smooth function in $\beta$ which, at $\beta =\sup\massinterval$, is equal to a positive fast decreasing function in $k$, while it vanishes at $\beta=\inf\massinterval$.
The integral is thus nonvanishing.

This implies that the SMOP cannot be satisfied and hence it is not possible to construct the Fermionic Projector.
Nonetheless \eqref{mwm} is a sufficient condition for the realization of the mWMOP.
\end{proof}

\textbf{\textit{Acknowledgements.}} 
We would like to thank Claudio Dappiaggi, Felix Finster, Christian G\'erard, Valter Moretti and Nicola Pinamonti for helpful discussions and comments on the manuscript. We are grateful to the referee for useful comments on the manuscript. S.M. is supported within the DFG research training group GRK 1692 ``Curvature, Cycles, and Cohomology'' and he is grateful to the department of mathematics of the University of Genoa for the kind hospitality during part of the realization of this work. N.D. is supported by a Ph.D. grant of the university of Genoa and he is grateful to the department of mathematics of the Universit\'e de Paris-Sud for the kind hospitality during part of the realization of this work.
The work for this paper was partly carried out during the program Modern Theory of Wave Equations at the Erwin Schr\"odinger Institute and we are grateful to the ESI for support and the hospitality during our stay.

\begin{small}

\end{small}

\end{document}